\documentclass[
 reprint,
 amsmath,amssymb,
 aps,
]{revtex4-2}

\usepackage{graphicx}
\usepackage{dcolumn}
\usepackage{bm}

\usepackage{amsmath, amsthm, amssymb, physics, mathtools, dsfont}
\usepackage{bbm}
\usepackage{xcolor}
\usepackage{graphicx}
\usepackage{float}
\usepackage[a4paper, left=30mm, right=30mm, top=30mm, bottom=30mm]{geometry}
\usepackage{hyperref}
\hypersetup{pdfborder=0 0 0, colorlinks, allcolors=black}
\usepackage{graphicx}
\usepackage{natbib}
\usepackage[inline]{enumitem}
\usepackage[capitalise]{cleveref}	
\usepackage{leftindex}

\newcommand{\HH}{\mathcal{H}}

\newcommand{\kb}[2]{\ket{#1}\hspace{-1mm}\bra{#2}}

\newcommand{\CC}{\mathbb{C}}
\newcommand{\UU}{\mathcal{U}}
\newcommand{\RR}{\mathbb{R}}
\newcommand{\ZZ}{\mathbb{Z}}
\newcommand{\NN}{\mathbb{N}}
\newcommand{\EE}{\mathcal{E}}

\newcommand{\ID}{\mathds{1}}

\newcommand{\UO}{\operatorname{U}(1)}
\newcommand{\SUT}{\operatorname{SU}(2)}
\newcommand{\OTH}{\operatorname{O}(3)}

\newcommand{\jmax}[1]{\bar J_{#1}}

\newtheorem{prop}{Proposition}

\newtheorem{cor}{Corollary}
\newtheorem{defi}{Definition}
\newtheorem{ex}{Example}

\newenvironment{equation-aligned}{\begin{equation}\begin{aligned}}{\end{aligned}\end{equation}}

\newsavebox{\mstrut}
\newcommand{\bbra}[1]{
    \sbox{\mstrut}{\(#1\)}
    \mathinner{\left\langle\kern-0.5\ht\mstrut\left\langle{#1}\right|\mkern-2mu\right|}
}
\newcommand{\kett}[1]{
    \sbox{\mstrut}{\(#1\)}
    \mathinner{\left|\mkern-2mu\left|{#1}\right\rangle\kern-0.5\ht\mstrut\right\rangle}
}

\usepackage{centernot}
\usepackage{stmaryrd}
\usepackage{mathtools}

\usepackage[normalem]{ulem}

\makeatletter
\newcommand{\xMapsto}[2][]{\ext@arrow 0599{\Mapstofill@}{#1}{#2}}
\def\Mapstofill@{\arrowfill@{\Mapstochar\Relbar}\Relbar\Rightarrow}
\makeatother
\makeatletter
\newlength{\negph@wd}
\DeclareRobustCommand{\negphantom}[1]{
  \ifmmode
    \mathpalette\negph@math{#1}
  \else
    \negph@do{#1}
  \fi
}
\newcommand{\negph@math}[2]{\negph@do{$\m@th#1#2$}}
\newcommand{\negph@do}[1]{
  \settowidth{\negph@wd}{#1}
  \hspace*{-\negph@wd}
}
\makeatother

\expandafter\def\expandafter\normalsize\expandafter{%
    \normalsize%
    \setlength\abovedisplayskip{5pt}%
    \setlength\belowdisplayskip{5pt}%
    \setlength\abovedisplayshortskip{-2pt}%
    \setlength\belowdisplayshortskip{4pt}%
}
\usepackage{titlesec}

\titlespacing*{\section}
{0pt}{3.5ex}{3.3ex }
\titlespacing*{\subsection}
{0pt}{2.5ex }{2.3ex }

\begin{document}

\title{Rotational covariance restricts available quantum states}

\author{Fynn Otto}
\author{Konrad Szyma\'nski}
\affiliation{
Naturwissenschaftlich-Technische Fakultät, Universität Siegen, Walter-Flex-Straße 3, 57068 Siegen, Germany
}

\date{\today}

\begin{abstract}

{
Quantum states of angular momentum and spin generally are not invariant under rotations of the reference frame. Therefore, they can be used as a resource of relative orientation, which is encoded in the asymmetry of the state under consideration. In this paper we introduce the analytical characterization of the rotational information by parameterizing the group characteristic function by polynomial functions. By doing so, we show that the set of states achievable through transformations lacking a reference frame (rotationally covariant ones) admits an analytical characterization and can be studied through the use of semidefinite optimization techniques. We demonstrate the developed methods via examples, and provide a physical scenario in which a reference-independent operation performs a metrologically useful operation: the preparation of a state of light improving interferometer sensitivity, which equivalently can be realized as a postprocessing step.
}
\end{abstract}

\maketitle

\section{\label{sec:Intro}Introduction}

\begin{figure}[t]
    \centering
    \includegraphics[width=\linewidth]{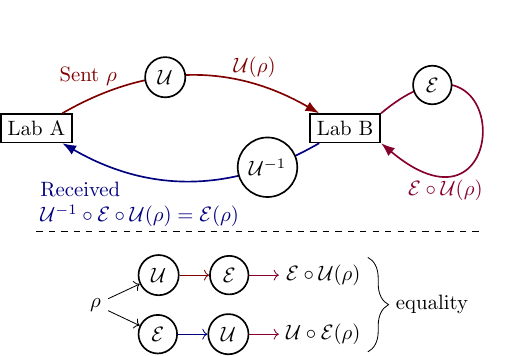}
    \caption{{The constraint of a quantum channel being symmetric with respect to a certain symmetry group is called} covariance and can be interpreted in the two pictured ways. \emph{Top}: a state $\rho$ is transferred between two laboratories, but as a result of lack of common reference the laboratory B receives a state modified by a unitary $U$ describing a particular element of a symmetry group. The channel $\EE$ is applied to the modified state $\mathcal{U}(\rho)=U\rho U^\dagger$ and it is sent back; the result from perspective of A is equal to $\EE(\rho)$ if the channel $\EE$ is covariant. \emph{Bottom}: equivalently, the channel $\EE$ is covariant if the result of state transformation by sequential application of $\EE$ and $\mathcal{U}$ does not depend on the order of operations (for all $\mathcal{U}$ and $\rho$).
    \label{fig:transfdiagram}}
\end{figure}
Symmetry plays a leading role across different fields of physics. From the most 
basic interactions to effective models, physical systems often exhibit some sort of it: the (a)symmetry of a state the system is found in determines its physical properties; the evolution may be invariant under a particular group of transformations; and yet symmetric states are sometimes unstable, as evidenced by spontaneous symmetry breaking phenomena.

Symmetric evolution implies a type of conserved quantity exists. This is the statement of Noether's theorem, valid for nondissipative systems, but similar observations can be made in the general case. The resource theory of asymmetry studies this type of questions {and \emph{the constraints resulting from symmetric evolution} in the context of state transformations are the main topic of this paper (see Fig. \ref{fig:transfdiagram}).}

The group of rotations is described by $\OTH$, the orthogonal group of order 3, or, in the context of quantum spin states, the closely related $\SUT$, the special unitary group of order 2. Here, the resource under consideration is the directional reference: the amount of orientation information a system provides. Systems found in rotationally invariant states convey no such information: they are singlet states, maximally mixed states of definite angular momenta, and their probabilistic combinations. Physically, no quantization axis is distinguished by such states, so they can not be used to convey directional reference. Any other state can -- in multiple inequivalent ways, and some states are better suited for that than others. For instance, two spin-$\frac12$ particles can transmit directional information either through a symmetric pair of parallel spins along the direction to be transmitted $\ket{\uparrow\uparrow}$ or as an antiparallel one $\ket{\uparrow\downarrow}$ -- the latter choice provides a better quality reference than the former spin-coherent state \cite{Gisin1999}.

Systems can also be symmetric with respect to time evolution. Harmonic evolution exhibits such a behavior: it is periodic, hence symmetric with respect to discrete time translations, and the relevant group of time translations becomes isomorphic to $\UO$, the unitary group of order 1. The information held in the state can be viewed as a time or relative phase reference across distant systems.

Questions arising from this type of structure have been studied before. In a two party scenario, a reference frame can be established via state transfer, and optimal protocols are known \cite{Gisin1999,Chiribella2005,BAGAN2006}. {The amount of information that can be gained by the measurement of a state modified by an unknown transformation is directly tied to the properties of the input state, and can be quantified \cite{Yang2017}.} This is the only way to do so: the reference needs to take a physical form, and the information is \emph{unspeakable} \cite{peres2002unspeakable}. {Consequences of relations between different reference frames include such foundational aspects of quantum theory as interpretation of Wigner's friend {thought} experiment \cite{de_la_Hamette2020}.}

Finally, establishing a reference frame between parties might be infeasible (e.g., it fluctuates rapidly and maintenance would have to be performed too often). It is known that even with this limitation, a state prepared with respect to unknown frame might be manipulated to some extent \cite{Gour_2008}, but the allowed operations are known to only decrease the informational content of the state \cite{Iman2012,Marvian2014Modes}. {The lack of a reference frame effectively leads to additional noise, reducing the efficiency of quantum metrology schemes \cite{afr_nek2015}.}

The meaning of frame-independent operations can be interpreted in the following way. Consider two labs, A and B (\cref{fig:transfdiagram}). If the relation between reference frames between the labs was known, the mathematical description  of a state (amplitudes of a pure state, matrix elements of a mixed state) with reference to A could be translated to B via an operation $\mathcal{U}$ (system-dependent, e.g. phase shift, rotation of coordinate system). However, if the relation is not determined (phase shift fluctuates, optical fiber modes mix), the operation $\mathcal{U}$ is not known. Therefore, if B was to transform the received state via a channel $\EE$ and send it back through the same quantum link, the only deterministically realizable operations from A's perspective are the ones that act the same regardless of $\mathcal{U}$. Such operations are called \emph{covariant} with respect to the group of frame transformations in question.

{Equivalently, the \emph{covariant} transformations can be thought of as commuting with all relevant $\mathcal{U}$: in this view, the preparation step with the channel $\EE$ with subsequent unknown transformation $\mathcal{U}$ is equivalent to the application of $\mathcal{U}$ first and postprocessing with $\EE$. One of the consequences of this property is explored in \cref{sec:interferometer}, where we show that in a balanced interferometer, the metrological properties of a coherent light state can be (probabilistically) improved with such an operation. In this scenario, if $\EE$ is applied first, it can be interpreted as the \emph{preparation of squeezed vacuum}, but it is possible to use it in the postprocessing step with the same overall result. }

Abstract classical and quantum information can be transferred {even without reference}: e.g., the amplitudes defining a qubit state may be encoded in a two-dimensional subspace invariant to the relevant transformation group \cite{Bartlett2007}. {A quantum state can also be sent along with a finite-size reference state, and a joint measurement of both parts leads to a better communication efficiency than the sequential procedure of establishing the reference frame and quantum communication \cite{Bartlett2009}. Similarly, if the information is encoded as a relation between a \emph{pair of systems} prepared with respect to an unknown reference frame, the optimal estimation procedure involves entangled measurements of both parties \cite{Bartlett2004}.}

The allowed transformations can be characterized for arbitrary groups \cite{Iman2012,Marvian2014Charac}, but the form of the transformation criterion may be unwieldy for nonabelian groups \cite{szymanski2023numerical}. This is the case for the group of rotations, and due to this mathematical complexity only limited results are known \cite{Gour_2008,szymanski2023numerical}. 

{This article addresses the question of state transformation with rotationally covariant channels, and with this in mind we developed}  the analytical characterization of the $\SUT$ characteristic functions {(\cref{sec:polychar})} -- a concept stemming from the polynomial description of $\UO$ \cite{hall2003lie,szymanski2023numerical}. It allows for a direct answer to the question of pure state interconversion by utilization of the algebra of complex homogeneous polynomials. The main results are contained in \cref{sec:covtrafos}. {In \cref{sec:interferometer} we present a possible application of our formalism, by proving that metrological properties of light states can be probabilistically improved with operations commuting with the action of an interferometer.} 

In order to provide mathematical and physical context of our research, in \cref{sec:theo_background} we review known results with the exemplary use of $\UO$ -- the group representing the time evolution of a harmonic system -- as a toy example for the theory.
\section{Theoretical background}\label{sec:theo_background}
\subsection{Phase reference and $\UO$ group}
Consider a standard quantum harmonic oscillator system, described by the Hamiltonian
\begin{equation-aligned}
    H = a^\dagger a,
    \label{eq:harmham}
\end{equation-aligned}
and denote the eigenstate of the energy  $n\in \NN$ by $\ket{n}$. An arbitrary {pure} state can be written as a sum 
\begin{equation}
\ket\psi=\sum_{n\in\NN} \psi_n \ket{n},
\label{eq:stateuone}
\end{equation}
 and its evolution dictated by \cref{eq:harmham}  is described by the unitary operator {$U(t)$ via}
\begin{equation-aligned}
    \ket{\psi(t)} = \!\!\overbrace{U(t)}^{\negphantom{xxxxxx}\sum_{n\in \NN} \exp(-i n t) \ket{n}\bra{n}\negphantom{xxxxxx}}\!\! \ket\psi=\sum_{n\in\NN} \psi_n \exp(-int) \ket{n}.
     \label{eq:unituone}
\end{equation-aligned}
The system described by \cref{eq:harmham} is harmonic and its evolution is cyclic: $U(t)=U(t+2\pi)$. Evidently, $U(t)$ is described by just a single phase $\exp(-it)$. The set of all such phase shifts $\{\exp(-it) \in \mathbb{C} : t\in[0,2\pi)\}$ is the definition of the group $\UO$, and $U(t)$ can be interpreted as a representation of this group. This means that harmonic evolution realizes abstract phase shifts in a physical system.

The eigenstates $\{\ket{n}\}$ evolve only trivially: they gain a phase factor, which without any other input is undetectable. Any measurement can only detect a \emph{phase difference}, either to external reference or -- as in the case of Aharonov-Bohm effect -- another part of the same system undergoing a different evolution. The eigenstate could be therefore considered as \emph{maximally symmetric states} with respect to the $\UO$ group. The study of state (a)symmetry with respect to different groups has been considered before \cite{Gour_2008, Iman2012,Marvian2014,Marvian2014Charac,Marvian2014Modes}; in this section we present the main already known results in order to provide a theoretical background for our findings.

Suppose the state $\ket\psi$ of \cref{eq:stateuone} was prepared with respect to an unknown phase reference. Such a situation could arise if the state describes a quantum state of a single mode of light, which was prepared in a distant lab (see \cref{fig:transfdiagram}): the lab A created $\ket\psi$ and sent it to lab B, but from its perspective any $\ket{\psi(t)}$ (\cref{eq:unituone}) is equally likely. For some purposes this means that the received state can be described by a density operator \cite{BARTLETT2006}
\begin{equation}
    \rho=\sum_{n\in\NN} \lvert \psi_n\rvert^2 \ket{n}\bra{n},    
\end{equation}
but this is not the entire picture: if the same state is simply reflected back to the lab A unaltered, it is still coherent. 

Can lab B perform any \emph{nontrivial} operation maintaining it coherence? The answer is affirmative, if and only if the operation is reference-independent. If lab $B$ applies a unitary ${T}$ to the received state and subsequently sends it back, from the perspective of lab A it is described by $U(t)^{-1} {T} U(t)$ -- frame change, the applied operation, and final inverse frame change back to the original one. Since $U(t)$ is unknown, for the realized operation to be frame-independent, $U(t)^{-1} {T} U(t)={T}$, which is equivalent to $[{T},U(t)]=0$ for all $t$.

Most general quantum operations are mathematically described by a unitary interaction ${T}$ involving an environment (or auxiliary system), which is then discarded since -- by assumption -- it later never interacts with the system of interest. Such operations may create statistical mixtures of pure states, which are typically described by density operators: if an operation produces a state $\ket{\psi_i}$ with probability $p_i$, the corresponding density operator is $\sum_i p_i \ket{\psi_i}\bra{\psi_i}$.

In this view, quantum operations are linear maps $\EE$: here, such an operator would take an arbitrary density operator $\rho=\sum_{m,n\in \NN} \rho_{m,n} \ket{m}\bra{n}$ as an input and return a modified state $\rho'$ of similar form.

Which of such most general operations could be deterministically performed on the state even without the phase reference? Similarly to the restricted case of unitary transformations (where $[{T},U(t)]=0$ for all t), a quantum channel should commute with phase shifts, in the following sense.

Let us denote the change of phase reference by an unitary channel $\UU_t(\rho) = U(t) \rho U(t)^\dagger$. Whatever quantum channel $\EE$ is applied, the result of the operation should not depend on the (unknown) reference: $\UU_t^{-1} \circ \EE \circ \UU_t$ should not depend on $t$. This results in the \emph{covariance} criterion for realizable quantum channels $\EE$, presented below. For the later use, it is formulated for an arbitrary group $G$ -- here, the relevant group is $\UO$.
 
\begin{defi}
A quantum channel $\EE$ is group-covariant with respect to the group $G$ with representation $\UU_g$ if and only if for all $g\in G$,
\begin{equation}
 \EE \circ \UU_g = \UU_g \circ \EE.
\end{equation}
\end{defi}

In short, the operations have to commute with the action of the symmetry group. This is a fundamental limitation independent from experimental feasibility: a complex, but still covariant, quantum channel might require an improvement in experimental techniques for its implementation. On the other hand, even very simple non-covariant operations can not be realized at all.

A covariance requirement independently arises in other contexts: it is a part of the constraints for thermal operations \cite{Lostaglio2015,Lostaglio2015A} and is a result of fundamental symmetries of nature, through axiomatic superselection rules \cite{Gour_2008}.

For the phase shift group $\UO$, the characterization of covariant operations is known \cite{Gour_2008}. Therefore, it is possible to classify states which can be prepared from a given  $\ket\psi$ if its phase reference is unknown. For instance, the phase of each eigenstate can be manipulated independently as a $\UO$-covariant operation; therefore, any state defined by \cref{eq:stateuone} can be turned to the canonical form of
\begin{equation}
    \ket\psi = \sum_{n\in\NN} \sqrt{p_n} \ket{n},
    \label{eq:defpsiu1}
\end{equation}
with $p_n = \lvert \psi_n \rvert^2$. Such a shift would require the engineering of a nonlinear addition to the Hamiltonian \cref{eq:harmham}, but is independent of a phase reference.

Because of this reduction, let us concentrate on states with real, positive amplitudes. It is known \cite{Gour_2008} that the state $\ket\psi$ defined above can be turned into another if and only if $p_n$ can be split into a convolution of two probability distributions over $\NN$ {after a constant shift (see \cref{fig:u1diag} for an example with $\delta=0$):}
\begin{prop}
    The $\UO$-covariant quantum channel $\EE$ such that $\sum \sqrt{p_n} \ket{n}
    \xmapsto{\EE}    \sum_{n\in\NN_0} \sqrt{q_n} \ket{n}$ exists if and only if there exists $\delta\in\NN$ and a sequence $w_n$ such that $\sum_{n\in\NN} w_n=1$, $w_n\ge0$ and
    \begin{equation-aligned}
    {p_{n-\delta} = \sum_{n\ge m\ge 0} q_m w_{n-m}.}
\end{equation-aligned}
\label{prop:uocov}
\end{prop}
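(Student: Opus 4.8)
The plan is to reduce everything to the structure of $\UO$-covariant channels and then read off a convolution identity. First I would recall the standard fact \cite{Iman2012,Marvian2014Charac} that a channel $\EE$ is $\UO$-covariant if and only if it admits a Kraus decomposition $\{K_\mu\}$ in which every operator carries a definite \emph{mode}: $U(t)K_\mu U(t)^\dagger = e^{-is_\mu t}K_\mu$ for some $s_\mu\in\ZZ$, equivalently $K_\mu\ket{n}=\kappa^\mu_n\ket{n+s_\mu}$, with the convention that terms whose target index is negative vanish. Covariance of any such family is immediate, since the mode phases cancel in $\EE\circ\UU_t=\UU_t\circ\EE$; the content is that covariance forces this form up to a change of Kraus basis.

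For the ``only if'' direction I would take $\EE(\kb{\psi}{\psi})=\kb{\phi}{\phi}$ with $\ket{\psi}=\sum_n\sqrt{p_n}\ket{n}$, $\ket{\phi}=\sum_n\sqrt{q_n}\ket{n}$. Because a sum of rank-one positive operators is rank one only when all the vectors are parallel, each $K_\mu\ket{\psi}$ is proportional to $\ket{\phi}$; comparing components in the number basis gives $\kappa^\mu_n\sqrt{p_n}\propto\sqrt{q_{n+s_\mu}}$ on $\mathrm{supp}(p)$, so the coefficient vectors of all Kraus operators of a fixed mode $s$ are proportional there and one may set $\sum_{\mu\text{ of mode }s}|\kappa^\mu_n|^2=w_s\,q_{n+s}/p_n$ with $w_s\ge 0$. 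Trace preservation $\sum_\mu K_\mu^\dagger K_\mu=\id$ evaluated on $\ket{n}$ then gives $\sum_s w_s q_{n+s}=p_n$ for $n\in\mathrm{supp}(p)$, and a short support argument (if $q_m>0$ and $w_s>0$ then $p_{m-s}\ge w_sq_m>0$) both extends this identity to all $n\ge0$ and shows that an active mode $s>0$ can occur only when $q$ vanishes on $\{0,\dots,s-1\}$. The latter makes $\sup\{s:w_s>0\}$ finite and, on summing the identity over $n$, forces $\sum_s w_s=1$. Setting $\delta:=\max\bigl(0,\sup\{s:w_s>0\}\bigr)$ and reindexing $w'_k:=w_{\delta-k}$ turns the identity into exactly $p_{n-\delta}=\sum_{0\le m\le n}q_m w'_{n-m}$ with $w'$ a probability distribution over $\NN$.

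For the ``if'' direction I would run this construction backwards. Given $\delta$ and $w$ as in the statement, put $v_s:=w_{\delta-s}$, so $v_s\ge0$, $\sum_s v_s=1$ and $p_n=\sum_s v_s q_{n+s}$. Define $K_s\ket{n}:=\sqrt{v_s\,q_{n+s}/p_n}\,\ket{n+s}$ for $n\in\mathrm{supp}(p)$ and $K_s\ket{n}:=0$ otherwise, and adjoin one extra operator $L:=\sum_{n\notin\mathrm{supp}(p)}\kb{n}{n}$ of mode $0$. One checks $\sum_s K_s^\dagger K_s+L^\dagger L=\id$ (on $\mathrm{supp}(p)$ the first sum already equals $\sum_s v_sq_{n+s}/p_n=1$, off it only $L$ contributes), so $\EE(\rho):=\sum_s K_s\rho K_s^\dagger+L\rho L$ is a $\UO$-covariant channel; since $L\ket{\psi}=0$ and, using the support argument once more, $K_s\ket{\psi}=\sqrt{v_s}\,\ket{\phi}$, we get $\EE(\kb{\psi}{\psi})=\bigl(\sum_s v_s\bigr)\kb{\phi}{\phi}=\kb{\phi}{\phi}$.

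The tidy way to package all of this is through the generating functions $P_\psi(z)=\sum_n p_nz^n$ and $Q_\phi(z)=\sum_n q_nz^n$: the displayed condition says precisely that $z^\delta P_\psi(z)=Q_\phi(z)W(z)$ for a power series $W$ with nonnegative coefficients and $W(1)=1$ — covariant operations may only \emph{factor} the photon-number generating function, up to an overall power-of-$z$ offset. I expect the genuine effort to be bookkeeping rather than any deep step: promoting the convolution identity from $\mathrm{supp}(p)$ to all of $\NN$, pinning down $\sum_s w_s=1$, and seeing that the finitely many number-raising/lowering modes are exactly what the offset $\delta$ records. The single imported ingredient is the definite-mode Kraus structure of $\UO$-covariant channels.
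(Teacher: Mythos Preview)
Your argument is correct. The paper itself does not prove \cref{prop:uocov}; it is quoted as a known result from \cite{Gour_2008}. The paper's own viewpoint on this statement appears only later, as a remark after \cref{prop:iman1}: the $\UO$ characteristic function of $\ket\psi=\sum_n\sqrt{p_n}\ket{n}$ is the Fourier series $\chi_\psi(t)=\sum_n p_n e^{-int}$ (\cref{eq:uocharfun}), so the factorization criterion $\chi_\psi=\chi_\xi\chi_\phi$ of \cref{prop:iman1} becomes, after inverse Fourier transform, the convolution $p=w*q$ (the shift $\delta$ accounts for the nontrivial one-dimensional irreps of $\UO$). That is the entire ``proof'' the paper offers.

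Your route is genuinely different: you work directly with the mode-graded Kraus structure of $\UO$-covariant channels, deduce the convolution identity from trace preservation, and then build an explicit channel in the converse. This is essentially the original argument of \cite{Gour_2008}. Its advantage is constructiveness --- you exhibit the channel $\EE$ --- whereas the characteristic-function route is nonconstructive but packages the result in a form that generalizes verbatim to nonabelian groups, which is precisely what the rest of the paper exploits for $\SUT$. Your closing paragraph, recasting the condition as $z^\delta P_\psi(z)=Q_\phi(z)W(z)$, is exactly the polynomial/characteristic-function picture the paper emphasizes in \cref{eq:u1polydecomp}, so the two approaches meet there.

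One small point of presentation: your support argument (``if $q_m>0$ and $w_s>0$ then $p_{m-s}>0$'') is most cleanly justified directly from $K_\mu\ket\psi=c_\mu\ket\phi$ rather than from the convolution identity (which at that stage is only known on $\mathrm{supp}(p)$); you use it correctly, but the phrasing slightly obscures where it comes from.
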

\begin{figure}[t]
    \centering
    \includegraphics[width=\linewidth]{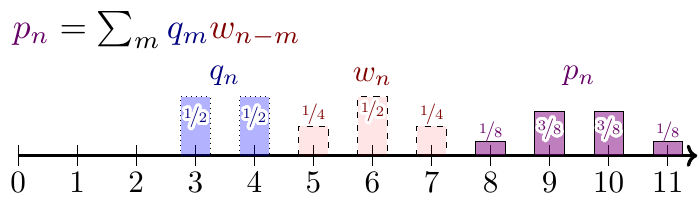}
    \caption{Example of probability distributions compatible with \cref{prop:uocov}: deterministic transformation of $\ket\psi=\sum_n \sqrt{p_n} \ket{n}$ to $\ket\phi=\sum_n \sqrt{q_n} \ket{n}$ is possible  with a $\UO$-covariant channel, because $p_n$ is a convolution of $q_n$ with an auxiliary $w_n$. 
    \label{fig:u1diag}}
\end{figure}
This condition can be cast in the language of polynomial theory \cite{szymanski2023numerical}: it is exactly the equation describing the coefficients $p_n$ of a product of univariate polynomials
\begin{equation-aligned}
    {\sum_{n\in\NN} p_n z^{n+\delta} = \left(\sum_k q_k  z^k\right)\left(\sum_l  w_l z^l \right).}
    \label{eq:u1polydecomp}
\end{equation-aligned}
Via this observation, it is straightforward to verify several properties of the accessible states. For arbitrary $\ket{\psi}$, only finitely many pure states are accessible, up to a shift in energy. The variance of the energy of the resulting state cannot increase \cite{Marvian2014}, and for generic probability distributions $p_n$ only trivial operations (energy shifts) are admissible, since no decomposition of the form \cref{eq:u1polydecomp} exists with valid (nonnegative) coefficients {$w_n$}.

\subsection{General results}

The general theory of group-covariant transformations for \emph{arbitrary} groups was developed in \cite{Iman2012} and articles following the thesis \cite{Marvian2014,Marvian2014Charac,Marvian2014Modes}. It is an abstract characterization valid for any group $G$; the central point of the theory is the so-called \emph{characteristic function}. 
\begin{defi}
Consider a group $G$ with representation $U_g$. The \emph{characteristic function} of a state $\ket\psi$ belonging to the space of the representation is the function $\chi: G \rightarrow \CC$,
\begin{equation}
    \chi_\psi(g)=\bra{\psi}\!U_g\!\ket{\psi}\!.
\end{equation}
\label{def:characfun}
\end{defi}

The characteristic function captures the entire asymmetry information of the state with respect to the group, and in certain cases (e.g., $\ket\psi$ is contained within a single irreducible representation) contains the entire information needed to reconstruct the state. What has been found in~\cite{Iman2012,Marvian2014Charac} is that the characteristic function $\chi_\psi$ determines the set of covariantly accessible pure states $\ket\phi$. {We present a version of the propositions in~\cite{Iman2012}, modified for clarity and tailored to the requirements for $\SUT$-covariant transformations. For simplicity, we assume that the group $G$ has no nontrivial one-dimensional irrep; in order to avoid mathematical inconsistencies, one might also require that the characteristic functions $\chi$ are sufficiently regular. The simplest (but most restrictive) regularity condition is that all appearing states belong to a given finite-dimensional Hilbert space $\HH$ (which may contain arbitrarily many trivial one-dimensional irreps of $G$); commonly encountered infinite-dimensional states (e.g. coherent states in the case of $\UO$) cause no problems as well.}\\
The equality of characteristic functions implies the existence of two-way covariant transformations:

\begin{prop}\label{prop:directreach}
    Consider two pure states $\ket\psi$ and $\ket\phi$. There exists a $G$-covariant unitary ${V}$ such that ${V}\ket\psi=\ket\phi$ and ${V}^\dagger\ket\phi=\ket\psi$ if and only if the characteristic functions of the states are equal:
\begin{equation}
    \chi_\psi(g)=\chi_\phi(g),~~\forall g\in G.
\end{equation}
\end{prop}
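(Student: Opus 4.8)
The plan is to handle the two implications separately, with the isotypic (Schur) decomposition of the representation space as the common backbone. Write $\HH=\bigoplus_\lambda \HH_\lambda\otimes\CC^{m_\lambda}$, where $\lambda$ labels the inequivalent irreps of $G$ occurring in $\HH$, the factor $\HH_\lambda$ carries the irrep $U^\lambda_g$, and $\CC^{m_\lambda}$ is the multiplicity space on which $G$ acts trivially. By Schur's lemma an operator commutes with all $U_g$ exactly when it has the block form $\bigoplus_\lambda \id_{\HH_\lambda}\otimes A_\lambda$; thus the $G$-covariant unitaries are precisely those with each $A_\lambda=V_\lambda$ unitary on $\CC^{m_\lambda}$. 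I would record this characterization first, since both directions use it.

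For the forward direction I would first dispose of a phase subtlety: covariance of the channel $\rho\mapsto V\rho V^\dagger$ only forces $U_g^\dagger V U_g=\omega(g)V$ for some function $\omega$, and a one-line computation shows $\omega$ is a one-dimensional representation of $G$; the standing assumption that $G$ has no nontrivial one-dimensional irrep then gives $\omega\equiv 1$, i.e. $[V,U_g]=0$. Granting this, $\chi_\phi(g)=\bra{\phi}U_g\ket{\phi}=\bra{\psi}V^\dagger U_g V\ket{\psi}=\bra{\psi}U_g V^\dagger V\ket{\psi}=\chi_\psi(g)$, which is the asserted equality.

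For the converse, decompose $\ket\psi=\bigoplus_\lambda\ket{\psi_\lambda}$ and $\ket\phi=\bigoplus_\lambda\ket{\phi_\lambda}$ with $\ket{\psi_\lambda},\ket{\phi_\lambda}\in\HH_\lambda\otimes\CC^{m_\lambda}$, and introduce the reduced operators on the irrep factors, $\sigma^\psi_\lambda=\operatorname{Tr}_{\CC^{m_\lambda}}\ket{\psi_\lambda}\bra{\psi_\lambda}$ and likewise $\sigma^\phi_\lambda$. A short partial-trace computation gives $\chi_\psi(g)=\sum_\lambda\operatorname{Tr}[U^\lambda_g\,\sigma^\psi_\lambda]$, and similarly for $\phi$; since the matrix coefficients of inequivalent irreducible representations are linearly independent as functions on $G$ (Schur orthogonality / Peter–Weyl), the hypothesis $\chi_\psi=\chi_\phi$ forces $\sigma^\psi_\lambda=\sigma^\phi_\lambda$ for every $\lambda$. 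Hence, within each block, $\ket{\psi_\lambda}$ and $\ket{\phi_\lambda}$ are two purifications of the same density operator with purifying system $\CC^{m_\lambda}$ of equal finite dimension, so by the standard purification/Schmidt argument there is a unitary $V_\lambda$ on $\CC^{m_\lambda}$ with $(\id_{\HH_\lambda}\otimes V_\lambda)\ket{\psi_\lambda}=\ket{\phi_\lambda}$ — match the Schmidt bases on the multiplicity side and extend the resulting partial isometry to a full unitary. Setting $V=\bigoplus_\lambda\id_{\HH_\lambda}\otimes V_\lambda$ produces a $G$-covariant unitary with $V\ket\psi=\ket\phi$, and applying $V_\lambda^\dagger$ blockwise gives $V^\dagger\ket\phi=\ket\psi$.

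The Schur-lemma description of covariant operators and the final assembly are routine; the step carrying the actual content is extracting $\sigma^\psi_\lambda=\sigma^\phi_\lambda$ from the equality of characteristic functions, which is exactly where linear independence of irrep matrix elements is needed (and hence where the regularity/finite-dimensionality hypothesis stated above enters), together with the bookkeeping of the $\UO$-phase ambiguity in the notion of a covariant unitary. I expect the purification step to be where one must be most careful about edge cases — blocks where $\sigma_\lambda$ vanishes, and the extension of a partial isometry to a unitary when $\operatorname{rank}\sigma_\lambda<m_\lambda$.
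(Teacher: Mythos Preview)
The paper does not actually prove this proposition: it is stated as a known result taken from \cite{Iman2012,Marvian2014Charac} (Marvian's thesis and the associated papers), with the surrounding text explicitly noting ``We present a version of the propositions in~\cite{Iman2012}.'' There is no proof of \cref{prop:directreach} anywhere in the main text or the appendices.

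Your argument is correct and is essentially the standard one given in those references: the isotypic decomposition $\HH=\bigoplus_\lambda\HH_\lambda\otimes\CC^{m_\lambda}$, Schur's lemma to identify the $G$-covariant unitaries as $\bigoplus_\lambda\id\otimes V_\lambda$, Peter--Weyl/Schur orthogonality to pass from $\chi_\psi=\chi_\phi$ to equality of the reduced operators $\sigma^\psi_\lambda=\sigma^\phi_\lambda$, and the purification theorem to produce the multiplicity-space unitaries. Your treatment of the phase ambiguity (a covariant unitary \emph{channel} only pins $V$ down up to a one-dimensional character, which is trivial under the paper's standing assumption that $G$ has no nontrivial one-dimensional irrep) is a nice point that the paper's statement leaves implicit. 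The edge cases you flag --- vanishing blocks and extending partial isometries when $\operatorname{rank}\sigma_\lambda<m_\lambda$ --- are handled exactly as you indicate and cause no trouble in finite dimension.
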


This serves as a basis for a one-way conversion criterion: via the Stinespring dilation of $G$-covariant channels,  a $G$-invariant auxiliary state $\ket\eta$ ($\chi_\eta(g)=1$ for all $g\in G$) can be added, followed by the application of a $G$-covariant unitary and possibly partial trace. The system traced over may carry some information; together with the target state, it should have the same symmetry properties, leading to the following characterization of pure state transformations (one of the main results of \cite{Marvian2014Charac}):
\begin{prop}\label{prop:iman1}
	There  exists a $ G $-covariant channel $ \EE $ mapping a pure state $ \ket{\psi} $ to $ \ket{\phi} $ if and only if there exists a state $ \ket\xi $ such that for all $ g\in G $: 
\begin{equation} 
\chi_\psi(g)=\chi_\xi(g)\chi_\phi(g).
\end{equation}
\end{prop}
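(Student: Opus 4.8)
The plan is to derive both implications from Proposition~\ref{prop:directreach}, using two auxiliary facts. First, the characteristic function is multiplicative under tensor products: if a global group element acts as $U_g\otimes U_g$ on a composite system, then $\chi_{\psi\otimes\xi}(g)=\bra\psi U_g\ket\psi\bra\xi U_g\ket\xi=\chi_\psi(g)\,\chi_\xi(g)$. Second, every $G$-covariant channel admits a $G$-covariant Stinespring dilation, i.e. a dilating isometry that is itself an intertwiner and an environment carrying a genuine representation of $G$. With these in hand the two directions become short.

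For the ``if'' direction, suppose $\ket\xi$ satisfies $\chi_\psi(g)=\chi_\xi(g)\chi_\phi(g)$ for all $g$. By multiplicativity this says $\chi_\psi=\chi_{\phi\otimes\xi}$. After adjoining a $G$-invariant ancilla $\ket0$ to $\ket\psi$ so that $\ket{\psi\otimes0}$ and $\ket{\phi\otimes\xi}$ can be regarded as vectors of a common representation space (padding with extra trivial irreps as needed), we still have $\chi_{\psi\otimes0}=\chi_{\phi\otimes\xi}$, so Proposition~\ref{prop:directreach} yields a $G$-covariant unitary $V$ with $V\ket{\psi\otimes0}=\ket{\phi\otimes\xi}$. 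Then $\EE(\rho)=\mathrm{Tr}_\xi\!\big(V(\rho\otimes\ket0\!\bra0)V^\dagger\big)$ is a composition of $G$-covariant maps --- tensoring with an invariant state, a covariant unitary, and the partial trace over a subsystem carrying the subrepresentation $U_g^\xi$ --- hence $G$-covariant, and it sends $\ket\psi\bra\psi$ to $\ket\phi\bra\phi$.

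For the ``only if'' direction, start from a $G$-covariant channel $\EE$ with $\EE(\ket\psi\bra\psi)=\ket\phi\bra\phi$, and pass to a covariant Stinespring dilation: an isometry $W\colon\HH\to\HH\otimes\HH_E$ intertwining $U_g$ with $U_g\otimes U_g^E$ and satisfying $\EE(\rho)=\mathrm{Tr}_E(W\rho W^\dagger)$. Applying it to $\ket\psi$, the reduced state of $W\ket\psi$ on the first factor is the pure state $\ket\phi\bra\phi$, which forces a product form $W\ket\psi=\ket\phi\otimes\ket\xi$ for some normalized $\ket\xi\in\HH_E$ (Schmidt decomposition with a rank-one first marginal). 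Since $W$ is a $G$-intertwiner, $\chi_\psi(g)=\bra\psi W^\dagger(U_g\otimes U_g^E)W\ket\psi=\chi_{\phi\otimes\xi}(g)=\chi_\phi(g)\chi_\xi(g)$, which is exactly the claimed factorization.

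The main obstacle is the covariant Stinespring dilation invoked in the second half: one must argue that the gauge freedom in dilating a symmetric channel can always be used to make the dilating isometry an honest $G$-intertwiner with the environment carrying a genuine representation (equivalently, that the Kraus operators can be grouped into a covariant set). This is standard but is the technical heart of the argument, and it is precisely where the regularity hypotheses flagged before the statement (finite-dimensional $\HH$, or sufficiently regular $\chi$) and the absence of nontrivial one-dimensional irreps are used. The remaining ingredients --- multiplicativity of $\chi$, the invariant-ancilla padding, covariance of the partial trace, and the purity argument --- are routine.
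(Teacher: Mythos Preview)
Your proposal is correct and follows precisely the approach the paper sketches just before stating the proposition: adjoin a $G$-invariant ancilla, invoke Proposition~\ref{prop:directreach} to obtain a $G$-covariant unitary, then partially trace; conversely, dilate covariantly and read off the product structure from purity of the output. The paper does not give a self-contained proof here but cites \cite{Iman2012,Marvian2014Charac} for the result, so your argument is in fact more detailed than what appears in the text while matching its outline exactly.
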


Further analysis shows that the \emph{probabilistic generation} is described by characteristic functions as well. They are realized in a similar fashion as the deterministic transformations: a $G$-invariant auxiliary state is added, followed by the application of a $G$-invariant unitary and a measurement of the auxiliary state in a $G$-invariant basis. Depending on the outcome of this measurement, a part of the system is traced out to end up with an ensemble $\{\ket{\phi_i},p_i\}_i$. This can be summarized as follows (Theorem~65 in \cite{Iman2012}).

\begin{prop}\label{prop:stochastic_trafo}
	There exists a $ G $-covariant map transforming $ \ket\psi $ to the ensemble $ \{\ket{\phi_i},p_i\}_i $ if and only if there are states $ \{\ket{\xi_i}\}_i $  such that for all $ g\in G $: 
 \begin{equation}
     \chi_\psi (g)=\sum_i p_i\chi_{\xi_i}(g)\chi_{\phi_i}(g).
 \end{equation}
\end{prop}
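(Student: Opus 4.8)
The plan is to obtain \cref{prop:stochastic_trafo} from \cref{prop:directreach} by mimicking the dilation argument behind \cref{prop:iman1}, with one extra ingredient: a measurement performed in a $G$-invariant basis. Whenever two pure states must be compared through \cref{prop:directreach}, I will silently enlarge the Hilbert space with trivial one-dimensional irreps of $G$ (embedding a state $\ket\psi$ as $\ket\psi\otimes\ket 0$ with $\ket 0$ $G$-invariant changes no characteristic function), which the regularity hypotheses render harmless.

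For sufficiency, suppose states $\ket{\xi_i}$ with $\chi_\psi(g)=\sum_i p_i\,\chi_{\xi_i}(g)\chi_{\phi_i}(g)$ are given. First I would fix an orthonormal family $\{\ket{c_i}\}$ of $G$-invariant vectors on an auxiliary register $C$ and assemble
\begin{equation}
 \ket{\Xi}=\sum_i \sqrt{p_i}\,\ket{\phi_i}_A\otimes\ket{\xi_i}_B\otimes\ket{c_i}_C .
\end{equation}
Because $U_g$ acts as the identity on $C$ and the $\ket{c_i}$ are orthonormal, a one-line computation gives $\chi_\Xi(g)=\sum_i p_i\,\chi_{\phi_i}(g)\chi_{\xi_i}(g)=\chi_\psi(g)$, so \cref{prop:directreach} supplies a $G$-covariant unitary $V$ with $V\ket\psi=\ket\Xi$. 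The desired covariant stochastic map is then: apply $V$, measure $C$ in the basis $\{\ket{c_i}\}$ — the effects $\ket{c_i}\!\bra{c_i}$ commute with every $U_g$, so this step is covariant — and on outcome $i$ (probability $p_i$) discard $B$ and $C$, leaving $\ket{\phi_i}$ on $A$.

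For necessity, I would start from the fact — imported from the general covariant-channel theory of \cite{Iman2012,Marvian2014Charac} — that any $G$-covariant map producing $\{\ket{\phi_i},p_i\}_i$ dilates exactly as above: append a $G$-invariant ancilla $\ket\eta$, apply a $G$-covariant unitary $V$, measure a register $C$ in a $G$-invariant orthonormal basis $\{\ket{c_i}\}$ (outcome $i$ with probability $p_i$), and trace out the remaining environment $B$. Since the conditional output on $A$ is the \emph{pure} state $\ket{\phi_i}$, the post-measurement vector on $AB$ must factorize as $\ket{\phi_i}_A\otimes\ket{\xi_i}_B$; hence $V(\ket\psi\otimes\ket\eta)=\sum_i\sqrt{p_i}\,\ket{\phi_i}_A\otimes\ket{\xi_i}_B\otimes\ket{c_i}_C$. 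Taking $\chi$ of both sides and using $\chi_{\psi\otimes\eta}=\chi_\psi$ (as $\chi_\eta\equiv1$), the characteristic-function invariance under the covariant unitary $V$, triviality of $U_g$ on $C$, and orthonormality of the $\ket{c_i}$, I would read off $\chi_\psi(g)=\sum_i p_i\,\chi_{\phi_i}(g)\chi_{\xi_i}(g)$.

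The hard part will be the necessity direction, specifically justifying that every $G$-covariant stochastic map has the stated dilated form — covariant Stinespring dilation together with the identification of covariant measurements as exactly those whose effects are supported on trivial subrepresentations. Once that structural input is granted, everything else is bookkeeping with characteristic functions plus the routine Hilbert-space padding needed to legitimately invoke \cref{prop:directreach}.
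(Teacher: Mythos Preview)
The paper does not supply its own proof of \cref{prop:stochastic_trafo}; it is quoted as Theorem~65 of \cite{Iman2012} in the background section, and the sentence preceding the proposition simply sketches the dilation picture (add a $G$-invariant ancilla, apply a $G$-covariant unitary, measure the auxiliary register in a $G$-invariant basis, trace out) without further argument. Your proposal reproduces exactly this dilation construction and is correct; the structural input you flag as ``the hard part'' --- that every $G$-covariant stochastic map admits such a dilated form --- is precisely what the paper also imports from \cite{Iman2012,Marvian2014Charac} rather than proving.
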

Note that the output of the channel is a postselected pure state and not the mixed state $\rho=\sum_i p_i \kb{\phi_i}{\phi_i}$. If only the target state is of interest, the result can be modified as shown in Corollary 66 in \cite{Iman2012}:
\begin{cor}\label{corr:stochastic_trafo}
	There exists a $ G $-covariant map transforming $ \ket\psi $ to $ \ket\phi $ with probability $ p $ if and only if there are states $ \ket\xi$ and $\ket\sigma $ such that for all $ g\in G $: 
 \begin{equation}
     \chi_\psi (g)=p\chi_\xi(g)\chi_\phi(g)+(1-p)\chi_\sigma(g).
 \end{equation}
\end{cor}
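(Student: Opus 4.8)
The plan is to read the corollary off from \cref{prop:stochastic_trafo} by collapsing the output ensemble into a ``success'' branch and a ``failure'' branch, using one auxiliary observation: every convex combination $\sum_j \mu_j \chi_{\gamma_j}(g)$ of characteristic functions of pure states (which may live in different representation spaces, including tensor products) is itself the characteristic function of a single pure state. To see this, pad all the $\ket{\gamma_j}$ into a common space by adjoining copies of the trivial irrep (permitted by the regularity assumption), form the mixed state $\rho=\sum_j \mu_j \ket{\gamma_j}\hspace{-1mm}\bra{\gamma_j}$, and purify it with an ancilla carrying only trivial irreps. The purification $\ket\tau$ then satisfies $\chi_\tau(g)=\bra\tau (U_g\otimes\id)\ket\tau=\sum_j \mu_j \chi_{\gamma_j}(g)$, since the off‑diagonal terms drop out against orthogonal ancilla vectors.

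For the ``if'' direction, suppose $\chi_\psi(g)=p\,\chi_\xi(g)\chi_\phi(g)+(1-p)\chi_\sigma(g)$. Pick any $G$-invariant auxiliary state $\ket\eta$ (so $\chi_\eta\equiv 1$) and rewrite the last term as $(1-p)\chi_\eta(g)\chi_\sigma(g)$. The identity now has precisely the form required by \cref{prop:stochastic_trafo} for the two‑outcome ensemble $\{(\ket\phi,p),(\ket\sigma,1-p)\}$ with auxiliary states $\ket{\xi_1}=\ket\xi$ and $\ket{\xi_2}=\ket\eta$. Hence there is a $G$-covariant map taking $\ket\psi$ to that ensemble; in particular it has an outcome of probability $p$ whose post‑measurement state is $\ket\phi$, which is the asserted transformation.

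For the ``only if'' direction, a $G$-covariant map transforming $\ket\psi$ to $\ket\phi$ with probability $p$ is, after dilating its failure branch into pure‑state outcomes by a Stinespring isometry followed by a measurement in a $G$-invariant basis (exactly the construction preceding \cref{prop:stochastic_trafo}), a covariant map producing an ensemble $\{\ket{\phi_i},p_i\}_i$ with a subset $S$ of outcomes satisfying $\ket{\phi_i}=\ket\phi$ and $\sum_{i\in S}p_i=p$. Applying \cref{prop:stochastic_trafo} and splitting the sum,
\begin{equation}
\chi_\psi(g)=\chi_\phi(g)\sum_{i\in S}p_i\chi_{\xi_i}(g)+\sum_{i\notin S}p_i\chi_{\xi_i}(g)\chi_{\phi_i}(g).
\end{equation}
The first sum is $p$ times a convex combination $\sum_{i\in S}(p_i/p)\chi_{\xi_i}(g)$, so by the auxiliary observation it equals $p\,\chi_\xi(g)$ for some pure $\ket\xi$; the second sum is $(1-p)$ times a convex combination of the characteristic functions $\chi_{\xi_i\otimes\phi_i}(g)=\chi_{\xi_i}(g)\chi_{\phi_i}(g)$, hence equals $(1-p)\chi_\sigma(g)$ for some pure $\ket\sigma$. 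This yields the stated decomposition.

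I expect the only real work to be bookkeeping rather than anything conceptual: making precise what ``transforming $\ket\psi$ to $\ket\phi$ with probability $p$'' means as a covariant instrument, checking that adjoining trivial irreps and purifying never destroys covariance (which is exactly what the regularity assumption — Hilbert spaces may carry arbitrarily many trivial one‑dimensional irreps — is there to guarantee), and confirming that the failure branch can be dilated to a pure‑state ensemble in a covariant way. All of these are routine given the dilation machinery already invoked for \cref{prop:stochastic_trafo,prop:iman1,prop:directreach}.
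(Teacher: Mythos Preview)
The paper does not supply its own proof of this corollary; it is quoted verbatim as Corollary~66 of \cite{Iman2012} and used as a black box. Your derivation from \cref{prop:stochastic_trafo} is correct and is essentially the standard way to obtain the corollary: collapse the ensemble into a success and a failure branch, and use purification against a trivially-transforming ancilla to realize convex combinations of characteristic functions as the characteristic function of a single pure state. The bookkeeping caveats you flag (meaning of ``with probability $p$'' as a covariant instrument, adjoining trivial irreps to embed all auxiliary states in a common representation space, covariant dilation of the failure branch) are the right ones, and they are indeed routine under the paper's regularity assumption.
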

{These findings, after minor modifications }\footnote{There exist nontrivial one-dimensional representations of $\UO$: every Fock state $\ket{n}$ corresponds to one.} {that allow for the structure of $\UO$}, are consistent with the earlier results concerning this group. In particular,  the characteristic function of the state \cref{eq:defpsiu1} is the Fourier transform of the defining probability distribution:
\begin{equation-aligned}
    \chi_\psi(t) = \sum_{n\in\NN} p_n \exp(-i n t),
    \label{eq:uocharfun}
\end{equation-aligned}
which can be interpreted as one of the polynomials appearing in \cref{eq:u1polydecomp} with $z:=\exp(-it)$.
The product of characteristic functions appearing in \cref{prop:iman1}, after an inverse Fourier transform, corresponds to the convolution of two probability distributions -- a result equivalent to \cref{prop:uocov}.

\subsection{Three-dimensional rotations: $\SUT$}

Physical reality does not depend on the chosen coordinate system, but \emph{some} must be chosen in order to accumulate experimental results. Frequently, a Cartesian coordinate system is used to refer to spatial degrees of freedom: any point is described by $\vec r:=(x,y,z)\in\RR^3$. The three numbers implicitly assume some directional reference; another observer may prefer a different one, with the transformation defined as $\vec r\mapsto O \vec r+\vec r_0$, where $O$ is an orthogonal matrix ({that is,} $O O^T$ is an identity matrix) and $r_0$ is a constant coordinate shift. 

Here we are interested in the lack of a \emph{rotational} frame reference and its effect on quantum operations. For this, a description of how a rotation of the reference frame affects the mathematical description is needed. Naturally, a wavefunction of a single spinless massive particle (e.g. electron in the original Schr\"odinger work \cite{Schr_dinger1926}) transforms as $\psi(\vec r)~\mapsto~\psi'(\vec r):=\psi(O \vec r)$. If a state is expanded in the basis of definite angular momenta $j$, and its projection on the $z$-axis (plus auxiliary indices $\alpha$ for any other degrees of freedom preserved by rotations, e.g. the principal quantum number):
\begin{equation-aligned}
    \ket\psi = \sum_{j=0,1,\ldots} \sum_\alpha \sum_{m=-j}^j \psi_{j,m,\alpha} \ket{j,m,\alpha},
\end{equation-aligned}
the effect of a coordinate change can be summarized as
\begin{equation-aligned}
    \ket{\psi'} = \sum_{j=0,1,\ldots} \sum_\alpha \sum_{m,m'=-j}^j U^{(j)}_{m,m'}\psi_{j,m',\alpha} \ket{j,m,\alpha}.
\end{equation-aligned}
The matrices $U^{(j)}$ form a \emph{representation} of the group of rotations: they describe the effect of coordinate change on the mathematical description of the state. 

The group of rotations is the main subject of this paper, and the above description is sufficient for orbital angular momenta. To accommodate the spin degrees of freedom (which can not be expressed as coordinate change in wavefunction), in calculations we will use the related group $\SUT$, which is the set of complex matrices
\begin{equation-aligned}
\SUT:=\left\{\begin{pmatrix} u&-v^*\\ v & u^*\end{pmatrix}: \lvert u\rvert^2+\lvert v\rvert^2=1\right\}.
\label{eq:su2def}
\end{equation-aligned}
 Any matrix from $\SUT$ can be transformed to an orthogonal coordinate change $O$, and subsequently, the action on amplitudes through $U^{(j)}$, but it provides a bit more generality and a more straightforward mathematical description. Representations of this group can be identified with spin and angular momentum states{; explicitly, the matrices in \cref{eq:su2def} can be though of as rotations of spin-$\frac12$ with $V=\exp\left(i(n_x \sigma_x +n_y \sigma_y+n_z \sigma_z)\right)\in\SUT$, where $\sigma_i$ are the standard Pauli matrices}. Our aim is to provide exhaustive criteria and numerical methods for the problem of state interconversion, that is the problem \emph{which quantum channels $\EE$ do not depend on spatial orientation, and what states can be reached with them?}. So far only limited results were known; one of them is the characterization of Kraus operators that can be used to build $\SUT$-covariant channels (Lemma 17 in \cite{Gour_2008}):

\begin{prop}
    \label{prop:kraussu2}
    Any $\SUT$-covariant quantum channel $\EE$ allows the decomposition $\EE(\rho) = \sum_i K_i \rho K_i^\dagger$ into Kraus operators with the most general form of $K_i$ as
 \begin{equation-aligned}
 \label{eq:krausformsu2}
     &K_{J,M,\alpha}(\{\vec{f}_{J,\alpha}\})=\sum_{j^\prime=0,\frac{1}{2},1,\dots}\sum_{m=-j^\prime}^{j^\prime}\sum_{j=\abs{J-j^\prime}}^{J+j^\prime}\\&~~\begin{pmatrix}
 		j^\prime & J & j\\ \!\!-m\! & \!\!M & \!\!m\!-\!M
 	\end{pmatrix}\!(-1)^{j-m}\!f_{J,\alpha}^{(j^\prime,j)}\!\!\kb{j^\prime,m}{j,m\!-\!M},
 \end{equation-aligned}
where $ J\in\{0,\frac{1}{2},1,\dots\} $, $ M\in\{-J,-J+1,\dots,J\} $,  $ \alpha $ is a multiplicity index, {the Wigner $3j$-symbol related to the Clebsch-Gordan coefficients is denoted by}
\begin{equation}
    \begin{pmatrix}
 		j^\prime & J & j\\ \!\!-m\! & \!\!M & \!\!m\!-\!M
 	\end{pmatrix}\!,
\end{equation}  and the vector $ \vec{f}_{J,\alpha} $ has entries $ f_{J,\alpha}^{(j,j^\prime)} $.
The standard normalization condition
\begin{equation}\label{eq:normkraus}
	\sum_{J,M,\alpha}K_{J,M,\alpha}(\{f_{J,\alpha}\})^\dagger K_{J,M,\alpha}(\{f_{J,\alpha}\})=\ID,
\end{equation}
can be written as
\begin{equation}\label{norm_channel}
	\sum_{J,j^\prime,\alpha}\abs{f_{J,\alpha}(j^\prime,j)}^2=2j+1.
\end{equation}
\end{prop}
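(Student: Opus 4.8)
\emph{Proof strategy.} The plan is to use the unitary freedom in the Kraus representation to bring the Kraus operators into the form of irreducible tensor operators, and then to invoke the Wigner--Eckart theorem. Fix an arbitrary \emph{minimal} Kraus decomposition $\EE(\rho)=\sum_i K_i\rho K_i^\dagger$. Covariance $\EE\circ\UU_g=\UU_g\circ\EE$ means that $\{K_i U_g\}_i$ and $\{U_g K_i\}_i$ are Kraus decompositions of the same map, so there is a unitary matrix $W(g)$ with $U_g K_i U_g^\dagger=\sum_j W_{ij}(g)\,K_j$. Composing this relation with itself shows that, after the harmless relabelling $g\mapsto g^{-1}$, the map $g\mapsto W(g)$ is a genuine unitary representation of $\SUT$: for a minimal decomposition $W(g)$ is uniquely determined, and in any case $\SUT$ is simply connected so any projective representation may be replaced by an equivalent linear one. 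Decomposing $W\cong\bigoplus_{J,\alpha}U^{(J)}$ into irreducibles and replacing the $K_i$ by the corresponding linear combinations, we obtain a relabelled family $K_{J,M,\alpha}$ satisfying $U_g K_{J,M,\alpha} U_g^\dagger=\sum_{M'}U^{(J)}_{M'M}(g)\,K_{J,M',\alpha}$; that is, for each fixed $(J,\alpha)$ the operators $\{K_{J,M,\alpha}\}_M$ are the $2J+1$ components of an irreducible tensor operator of rank $J$.

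Having reached this structural form, I would apply the Wigner--Eckart theorem in the angular-momentum basis $\{\ket{j,m}\}$: the matrix element of a rank-$J$ tensor operator between $\ket{j,m-M}$ and $\ket{j',m}$ equals a Clebsch--Gordan coefficient (equivalently, up to a phase and a factor $\sqrt{2j+1}$, a Wigner $3j$-symbol) multiplying a reduced matrix element that depends only on $(J,\alpha,j',j)$ and not on the magnetic numbers. Calling this reduced element $f_{J,\alpha}^{(j',j)}$ and fixing the standard phase and normalisation conventions for the $3j$-symbols reproduces \cref{eq:krausformsu2}, including the triangle selection rule $\abs{J-j'}\le j\le J+j'$, the constraint that the $z$-projections of bra and ket differ by $M$, and the prefactor $(-1)^{j-m}$ (which encodes the identification of $\bra{j,m}$ with a component of the representation conjugate to $U^{(j)}$). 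Conversely, every operator of the form \cref{eq:krausformsu2} manifestly obeys the tensor-operator transformation law and hence assembles into an $\SUT$-covariant channel whenever the normalisation holds, so \cref{eq:krausformsu2} is indeed the most general admissible Kraus operator.

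For the normalisation statement, observe that $T:=\sum_{J,M,\alpha}K_{J,M,\alpha}^\dagger K_{J,M,\alpha}$ commutes with every $U_g$, hence by Schur's lemma acts as a scalar on each irrep $j$; it is therefore enough to compute a single diagonal matrix element $\bra{j,n}T\ket{j,n}$. Substituting \cref{eq:krausformsu2}, squared $3j$-symbols appear, and summing over $M$ via the orthogonality relation $\sum_{m_1}\bigl(\begin{smallmatrix}j_1 & j_2 & j_3\\ m_1 & -m_1-m_3 & m_3\end{smallmatrix}\bigr)^{2}=1/(2j_3+1)$ (valid under the triangle condition, which is built into the Kraus form) collapses everything to $\bra{j,n}T\ket{j,n}=(2j+1)^{-1}\sum_{J,j',\alpha}\abs{f_{J,\alpha}(j',j)}^{2}$. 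Imposing $T=\ID$ gives exactly \cref{norm_channel}.

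The step I expect to be the genuine obstacle is the first one: converting the \emph{channel} covariance $\EE\circ\UU_g=\UU_g\circ\EE$ into a \emph{representation-theoretic} statement about individual Kraus operators. This rests on the uniqueness-up-to-unitary of minimal Kraus decompositions together with the observation that the resulting matrices $W(g)$ close into a bona fide unitary representation of $\SUT$. Once that is established, the remainder is an application of the Wigner--Eckart theorem and standard $3j$-symbol identities, the only real care being the bookkeeping of phase conventions so that the result comes out precisely as written in \cref{eq:krausformsu2}.
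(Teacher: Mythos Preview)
The paper does not actually prove this proposition: it is quoted verbatim as Lemma~17 of \cite{Gour_2008} and simply invoked. So there is no ``paper's own proof'' to compare against beyond the citation.

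That said, your strategy is the standard one and essentially what the cited reference does. Covariance together with the unitary freedom in a minimal Kraus decomposition forces the Kraus operators to assemble into irreducible tensor operators of $\SUT$; the Wigner--Eckart theorem then pins down their matrix elements in the $\ket{j,m}$ basis up to reduced matrix elements $f_{J,\alpha}^{(j',j)}$, yielding \cref{eq:krausformsu2}; and the $3j$-symbol orthogonality relation you quote converts the trace-preservation constraint into \cref{norm_channel}. The one place to be a bit more careful is your claim that $g\mapsto W(g)$ is automatically a linear representation: for a minimal decomposition uniqueness gives $W(gh)=W(g)W(h)$ directly (no projective phase appears, so the appeal to simple connectedness, while not wrong, is unnecessary), and for a non-minimal decomposition one should first pass to a minimal one. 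Otherwise the argument is sound.
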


$\SUT$-covariant unitary channels have just one Kraus operator $K_{0,0,0}=\sum_j e^{i\theta_j}\Pi_j$, with $\Pi_j$ being projectors onto the irrep $j$, changing the relative phases between the irreps.\\
This has led to partial answers to the transformation problems \cite{Gour_2008}: if we restrict ourselves to superpositions of spin-coherent states in a single direction $\Vec{n}$, the set of accessible states admits a simple characterization. Without loss of generality, we can assume that $\Vec{n}=(0,0,1)^T$ and the states take the form
\begin{equation}
    \ket\psi=\sum_j \sqrt{p_j} \ket{j=j,m=j},
    \label{eq:sutcohst}
\end{equation}
with $\{p_j\}_{j=0,\frac12,1,\dots}$ being a probability distribution. The following proposition (Theorem 20 in \cite{Gour_2008}) solves the interconversion problem for this restricted subset of spin states.
\begin{prop}
Consider a pair of states defined as in \cref{eq:sutcohst}: the state $\ket\psi$ defined by the probability distribution $p_j$ and $\ket\phi$ defined by $q_j$. There exists a $\SUT$-covariant channel $\EE$ such that $\EE(\ket\psi)=\ket\phi$ if and only if there exists a probability distribution $\xi_j$, such that
\begin{equation}
    p_j=\sum_{J=0,\frac12,1,\dots} \xi_J q_{j+J}.
\end{equation}
\end{prop}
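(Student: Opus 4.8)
The plan is to obtain this as a direct specialization of the characteristic-function criterion \cref{prop:iman1}, mirroring the $\UO$ computation around \cref{eq:u1polydecomp}. First I would evaluate the characteristic functions of the states of the form \cref{eq:sutcohst}. Because the representation is block diagonal in the irrep label $j$, all cross terms drop out and $\chi_\psi(g)=\sum_j p_j\,\bra{j,j}U_g\ket{j,j}$. The elementary input is that, in the parametrization \cref{eq:su2def}, the highest-weight vector $\ket{j,j}$ is the $2j$-fold symmetric power of $\ket{\tfrac12,\tfrac12}$, so $\bra{j,j}U_g\ket{j,j}=u^{2j}$, where $u$ is the upper-left entry of the $\SUT$ matrix and ranges over the closed unit disk as $g$ runs over $\SUT$. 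Hence $\chi_\psi$ and $\chi_\phi$ --- and, as will be shown, every admissible $\chi_\xi$ --- collapse to univariate power series in the single complex variable $u$: $\chi_\psi=\sum_j p_j u^{2j}$, $\chi_\phi=\sum_j q_j u^{2j}$, convergent on the disk since $\sum_j p_j=\sum_j q_j=1$.

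Second, I would determine which auxiliary states $\ket\xi$ are permitted. By \cref{prop:iman1} we need a state with $\chi_\psi=\chi_\xi\chi_\phi$; since the two known factors depend on $g$ only through $u$, so does $\chi_\xi=\chi_\psi/\chi_\phi$, which must moreover be a genuine characteristic function. Restricting $g$ to the maximal torus ($v=0$, $|u|=1$), on which $U_g$ acts diagonally in the $J_z$-eigenbasis, gives $\chi_\xi(g)=\sum_m \xi_m u^{2m}$ with $\xi_m=\sum_{j,\alpha}\lvert\bk{j,m,\alpha}{\xi}\rvert^2\ge 0$ and $\sum_m\xi_m=\no{\ket\xi}^2=1$; and the fact that $\chi_\psi/\chi_\phi$ --- once the two supports are made comparable, the case where $\phi$ lacks low-$j$ components being dispatched by a short divisibility argument --- is holomorphic at $u=0$ forces $\xi_m=0$ for $m<0$. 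Thus the $\xi_m$, $m\ge0$, form a probability distribution, and conversely every probability distribution $\{\xi_J\}$ is realized by $\ket\xi=\sum_J\sqrt{\xi_J}\ket{J,J}$. Substituting into $\chi_\psi=\chi_\xi\chi_\phi$ reduces the criterion to the power-series factorization $\sum_j p_j u^{2j}=\bigl(\sum_J\xi_J u^{2J}\bigr)\bigl(\sum_k q_k u^{2k}\bigr)$, the $\SUT$ analogue of \cref{eq:u1polydecomp} (now with no energy-shift ambiguity, since $\SUT$ has no nontrivial one-dimensional irreps), and a term-by-term comparison of coefficients yields exactly the asserted relation between $p$, $q$ and $\xi$.

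The reverse implication is then immediate: given a probability distribution $\xi$ solving the relation, put $\ket\xi=\sum_J\sqrt{\xi_J}\ket{J,J}$, observe that the same coefficient identity gives $\chi_\xi(g)\chi_\phi(g)=\chi_\psi(g)$ for all $g$, and invoke \cref{prop:iman1} to produce the $\SUT$-covariant channel (equivalently, Kraus operators of the form \cref{eq:krausformsu2} can be written down by hand). I expect the only genuinely delicate point to be the middle step --- showing that an a priori arbitrary $\ket\xi$ admissible in \cref{prop:iman1} can always be traded for a superposition of highest-weight states with nonnegative, normalized weights. This rests on the positivity of the Fourier coefficients of any characteristic function along the maximal torus, together with the observation that $u^{2j}$ is the unique Wigner matrix element depending holomorphically on $u$ alone; the attendant convergence questions and support edge cases are precisely what the regularity hypotheses stated just before \cref{prop:directreach} are meant to control.
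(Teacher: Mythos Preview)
The paper does not prove this proposition; it is quoted as Theorem~20 of \cite{Gour_2008}, accompanied only by the remark that for highest-weight superpositions the relevant part of $\SUT$ is the maximal torus (whence the $\UO$ analogy) and that no overall shift appears because the only pure invariant ancilla is $\ket{0,0}$. Your route---specializing \cref{prop:iman1} via $\bra{j,j}U_g\ket{j,j}=u^{2j}$ and reducing to a one-variable factorization---is exactly how the paper's own machinery (\cref{prop:charaspoly}, \cref{prop:poly_det}) would recover the result, so there is nothing substantive to contrast.

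One point you should not have glossed over, though: matching coefficients in $\sum_j p_j u^{2j}=\bigl(\sum_J\xi_J u^{2J}\bigr)\bigl(\sum_k q_k u^{2k}\bigr)$ gives $p_j=\sum_J\xi_J\,q_{j-J}$, not $q_{j+J}$ as printed in the proposition. The printed sign is the error---with $q_{j+J}$ the criterion would forbid the trivially covariant map $\ket{1,1}\to\ket{0,0}$ (take $p_j=\delta_{j,1}$, $q_k=\delta_{k,0}$: the right-hand side then vanishes for every $j\ge\tfrac12$). Your derivation is correct and the displayed formula is not; you should flag the discrepancy rather than assert exact agreement. Your handling of the ``delicate step'' (forcing $\xi_m=0$ for $m<0$) is essentially right; the cleanest closure is to note that $\chi_\xi$, being continuous on all of $\SUT$ and depending only on $u$ over the full closed disk, coincides with the rational function $\chi_\psi/\chi_\phi$, which is therefore pole-free at $u=0$---this kills the negative modes in one stroke without a separate divisibility case.
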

Note the similarity with \cref{prop:uocov}. It is not accidental; for this class of states the most relevant part of the $\SUT$ group is rotation along the quantization axis, again described by $\UO$. Here, a shift of the prior probability distribution $p_j$ is not allowed, because the only pure invariant state, which can be added as an ancilla in the Stinespring dilation of the channel $\EE$, is $\ket{0,0}$ (plus multiplicities). Further results concerning stochastic transformations can be found in \cite{Gour_2008}.

\section{\label{sec:polychar} Relevant mathematical structures}

\subsection{Polynomial $\SUT$ representation}
\label{subsec:majorana}

In this section, we will show how the action of $\SUT$ rotations on spin states can be parameterized using polynomials. The parametrization is extracted from the form of overlap functions $\langle\psi\vert\vec n\rangle$, where $\ket{\vec n}$ is a spin coherent state; this is related to the \emph{Majorana stellar representation} (see \cref{app:majorana}), which allows for an equivalent description of a state $\ket\psi$ by points $\vec n$ on a sphere for which this overlap vanishes. Here, a particularly simple form of a coherent state transformation under rotations (\cref{eq:cohtrafosu2}) is employed to extract the polynomial form of the rotation from the overlap function (\cref{eq:majdefeq}), which then can be applied to an arbitrary state $\ket\psi$ (\cref{prop:su2_reps}).
The reasoning starts with pure states of definite total angular momentum: 
\begin{equation-aligned}
    \ket\psi = \sum_{m=-j}^j \psi_m \ket{j,m}.
    \label{eq:defjstate}
\end{equation-aligned}
{The } \emph{spin coherent states} {used in the derivations maximize the spin component along a certain axis} {defined as follows} \cite{Chryssomalakos2018}:
\begin{defi}
\label{def:coherentspin}
    The spin coherent state $\ket{\vec n}$ within an irreducible representation corresponding to the total angular momentum of $j$ (with the spin matrices $J_x, J_y, J_z$) is the normalized eigenstate of the maximum eigenvalue of $\vec n \cdot \vec J$:
    \begin{equation-aligned}
        (\vec n \cdot \vec J)\ket{\vec n} = j \lvert \vec n \rvert \ket{\vec n}.
    \end{equation-aligned}
\end{defi}
A simple example of a coherent state is $\ket{j=J,m=J}$ for any $J$: it corresponds to the vector $\vec n=(0,0,1)$. Such states, are defined up to a phase only, but this does not pose an issue here, since we employ a concrete parametrization stemming from the observation that $\ket{\vec n}$ can be regarded as a symmetric product state of $2j$ spin-$\frac12$ subsystems:
\begin{equation}
\label{eq:cohasprod}
\ket{\vec n}=\left(z_1\ket{\uparrow}+z_2\ket{\downarrow}\right)^{\otimes 2j},
\end{equation}
where the complex numbers $z_1, z_2$ determine the state of each individual spin-$\frac12$ subsystem. Back to the spin-$j$ representation, it can be written in full as 
\begin{equation}
\ket{\vec n}=\sum_{m=-j}^j \sqrt{\binom{2j}{j-m}}z_1^{j+m}z_2^{j-m}\ket{j,m}.
\label{eq:cohdefpoly}
\end{equation}

Provided that $\lvert z_1 \rvert^2 + \lvert z_2 \rvert^2=1$ (which amounts to normalization of $\ket{\vec n}$), the expectation values of the spin operators can be shown to be
\begin{equation-aligned}
    \langle J_x \rangle_{\vec n} \!=\! 2j& \Re(z_2 z_1^*),~~ \langle J_y \rangle_{\vec n}\!=\! 2j \Im(z_2^\ast z_1),\\
    &\langle J_z \rangle_{\vec n}=j (\lvert z_1\rvert^2 - \lvert z_2 \rvert^2).
    \label{eq:expvals}
\end{equation-aligned}
Now we can define a homogeneous polynomial $f_\psi$ in complex variables $z_1, z_2$ associated with the state $\ket\psi$ via the overlap
\begin{equation-aligned}
    f_\psi(z_1,z_2) &= \langle\psi \vert \vec n \rangle \\&
    =\!\! \sum_{m=-j}^{j}\!\sqrt{\binom{2j}{j-m}}\psi_{m}^* z_1^{j+m}z_2^{j-m}.
    \label{eq:majdefeq}
\end{equation-aligned}
If a state $\ket\psi$ undergoes a rotation with 
\begin{equation-aligned}
U={\exp\left(i (n_x J_x^{(j)}+n_y J_y^{(j)}+n_z J_z^{(j)})\right)},
\label{eq:expsu2param}
\end{equation-aligned}the homogenous polynomial of $U\ket\psi$ transforms as
\begin{equation-aligned}
    f_\psi = \langle \psi \vert\vec n\rangle \mapsto f_{U\psi} = \overbrace{\langle \psi \vert U^\dagger}^{(U\ket\psi)^\dagger}\! \vert \vec n \rangle.
\end{equation-aligned}
Importantly, one can act with $U^\dagger$ on the coherent state itself; evidently, it amounts to a rotation of each individual spin-$\frac12$ component in \cref{eq:cohasprod}. The action on the spin-$\frac12$ subspace defines the rotation completely; in accordance with the definitional requirements of the group $\SUT$ (that they are unitary and unit determinant complex matrices of size 2) one can parameterize the action as 
\begin{equation-aligned}
    U^{(j=\frac12)}=\overbrace{\begin{pmatrix} u & -v^\ast \\ v & u^\ast \end{pmatrix}}^{V},
    \label{eq:su2param}
\end{equation-aligned}
and in consequence, a state $\ket{\vec n}$ defined by complex numbers $z_1, z_2$ is transformed to $U^\dagger \ket{\vec n}$, corresponding to $z'_1, z'_2$ by
\begin{equation-aligned}
     \begin{pmatrix}
     z'_1\\z'_2
     \end{pmatrix}:= \overbrace{\begin{pmatrix}
         u^* & v^*\\-v&u
     \end{pmatrix}}^{V^\dagger} \begin{pmatrix}
     z_1\\z_2
     \end{pmatrix}.
     \label{eq:cohtrafosu2}
\end{equation-aligned}
This simple transformation of the coherent state parameters $z_1, z_2$ under rotations can be translated to a strict statement (see Chapter 4.3.4 of \cite{hall2003lie}).
\begin{prop}\label{prop:su2action}
Let us parameterize the element $V$ of $\SUT$ as in \cref{eq:su2param}. Take a state defined as in \cref{eq:defjstate} corresponding to a polynomial $f_\psi$ defined in \cref{eq:majdefeq}. Then, the state $\ket\phi=U\ket\psi$ transformed by a standard representation corresponding to spin $j$ has the polynomial
\begin{equation}
    \begin{aligned}
        &f_{\phi}(z_1,z_2)= f_\psi\left(V^{\dagger} \begin{pmatrix}z_1 \\ z_2 \end{pmatrix}\right) \\
        &=f_\psi(\overbrace{u^\ast z_1+v^\ast z_2}^{z_1'},\overbrace{-vz_1+u z_2}^{z_2'})\\
        &=\sum_{m=-j}^{j} \sqrt{\binom{2j}{j-m}}\psi_{m}^* z_1'^{j+m}z_2'^{j-m}.
    \end{aligned}
    \label{eq:polyaction}
\end{equation}
\end{prop}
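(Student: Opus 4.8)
The plan is to unwind the definition \cref{eq:majdefeq} and transfer the unitary from the bra onto the coherent ket, where its action is elementary. By definition $f_{U\psi}(z_1,z_2)=\langle U\psi\vert\vec n\rangle=\langle\psi\vert U^\dagger\vert\vec n\rangle$, so the whole statement reduces to evaluating $U^\dagger\ket{\vec n}$.

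The structural fact I would invoke is that the standard spin-$\tfrac{1}{2}$-to-spin-$j$ passage is the $2j$-fold symmetric tensor power: under the identification \cref{eq:cohdefpoly} of $\ket{j,m}$ with suitably normalized symmetric states of $2j$ qubits, the representative of $W\in\SUT$ restricted to the symmetric subspace equals $W^{\otimes 2j}$. Since a coherent state is exactly the symmetric product state $\ket{\vec n}=(z_1\ket{\uparrow}+z_2\ket{\downarrow})^{\otimes 2j}$ of \cref{eq:cohasprod}, and since $U^\dagger=U^{-1}$ is the spin-$j$ representative of $V^{-1}=V^\dagger$, this gives $U^\dagger\ket{\vec n}=(V^\dagger)^{\otimes 2j}(z_1\ket{\uparrow}+z_2\ket{\downarrow})^{\otimes 2j}=\bigl(V^\dagger(z_1\ket{\uparrow}+z_2\ket{\downarrow})\bigr)^{\otimes 2j}$. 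Reading off the qubit amplitudes with $V^\dagger$ as in \cref{eq:cohtrafosu2} yields $(z_1',z_2')^T=V^\dagger(z_1,z_2)^T$, i.e. $z_1'=u^\ast z_1+v^\ast z_2$ and $z_2'=-vz_1+uz_2$. Hence $U^\dagger\ket{\vec n}=\ket{\vec n'}$ is again a coherent state, the one labelled by $(z_1',z_2')$.

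Combining the two steps, $f_{U\psi}(z_1,z_2)=\langle\psi\vert\vec n'\rangle=f_\psi(z_1',z_2')$, which is precisely the first two lines of \cref{eq:polyaction}; substituting $z_1\mapsto z_1'$, $z_2\mapsto z_2'$ into the expansion \cref{eq:majdefeq} of $f_\psi$ then gives the explicit series in the last line. One should also note that $f_\phi$ is again homogeneous of degree $2j$ because $V^\dagger$ acts linearly, so it is the Majorana polynomial of a genuine spin-$j$ state, as it must be since $U$ is unitary.

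The only step that needs real care — and the main, if modest, obstacle — is the identification of the standard spin-$j$ irrep with the symmetric power $W\mapsto W^{\otimes 2j}$ on the symmetric subspace, including the fact that the binomial weights in \cref{eq:cohdefpoly} are exactly the ones for which this is an equivalence of representations. This is classical (it is the content cited as Chapter 4.3.4 of \cite{hall2003lie}): it can be checked either by matching the action of $J_\pm$ and $J_z$ on both sides, or by noting that the symmetric power is irreducible of dimension $2j+1$ with highest weight $j$. Granting this, the rest is the short substitution above, with no analytic subtleties since everything takes place inside a single finite-dimensional irrep.
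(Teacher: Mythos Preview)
Your argument is correct and follows essentially the same route as the paper: the text preceding the proposition derives it by writing $f_{U\psi}=\langle\psi\vert U^\dagger\vert\vec n\rangle$, letting $U^\dagger$ act on the coherent state via the spin-$\tfrac12$ factors of \cref{eq:cohasprod}, and reading off the linear substitution \cref{eq:cohtrafosu2}. Your version is slightly more explicit in invoking the symmetric-tensor-power realization of the spin-$j$ irrep, but the substance is identical.
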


As a result, the action of $\SUT$ on a space of spin-$j$ states can be parameterized by polynomials in $u, u^*, v,$ and $v^*$: this is a basis of the results in the following section. The polynomial representation $U^{(j)}$ can be read off from the transformation laws for monomials appearing in \cref{eq:polyaction}, {and} the matrix element {$U^{(j)}_{m^\prime,m}=\langle j,m'\vert U^{(j)} \vert j,m\rangle$} is given by 
\begin{equation-aligned}
    \!\!\!U^{(j)}_{m',m}\!=&\!\sqrt{\!\frac{\!\binom{2j}{j-m}}{\!\binom{2j}{j-m'}}}\sum_{\!\!a=\max \left\{0,m'-m\right\}}^{\!\!\min\left\{\!j\!-\!m\!,j\!+\!m'\!\right\}}\!\!\!\!\binom{\!j\!-\!m\!}{a}\!\!\binom{j\!+\!m}{\!m\!\!-\!m'\!+\!a\!}\\
    &\times (-1)^{a} u^{j+m'-a}u^{\ast j-m-a}v^{m-m'+a} v^{\ast a}.
    \label{eq:rep_polysum}
\end{equation-aligned}
 The inside sum can be calculated to yield an ordinary hypergeometric function $\,_2 F_1$ -- see \cref{app:majorana} for details; in the following text the hypergeometric form of $U^{(j)}$ is not going to be used.

\begin{prop}\label{prop:su2_reps}
    The coefficients $\{\psi_m\}_{m=-j}^j$ of a spin-$j$ state transform under a rotation by $V=\begin{pmatrix} u & -v^\ast \\ v & u^\ast \end{pmatrix}\in\SUT$ as {$\psi_{m'}=\sum_{m=-j}^j U^{(j)}_{m',m}\psi_m$ with $U^{(j)}_{m',m}$} given by \cref{eq:rep_polysum}. 
\end{prop}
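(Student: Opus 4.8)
The plan is to turn the informal phrase ``$U^{(j)}$ can be read off from the transformation laws for monomials in \cref{eq:rep_polysum}'' into an explicit computation: expand both sides of the polynomial identity supplied by \cref{prop:su2action} in the monomial basis $\{z_1^{j+m}z_2^{j-m}\}_{m=-j}^{j}$ and compare coefficients. The starting point is that the assignment $\ket\psi\mapsto f_\psi$ of \cref{eq:majdefeq} is antilinear and injective, because the $2j+1$ monomials $z_1^{j+m}z_2^{j-m}$ are linearly independent; hence any polynomial identity between such expressions uniquely fixes the coefficients of the underlying state. In particular the normalized monomials $\sqrt{\binom{2j}{j-m}}\,z_1^{j+m}z_2^{j-m}$ represent the orthonormal basis $\{\ket{j,m}\}$, cf.\ \cref{eq:cohdefpoly}, so the quantity $U^{(j)}_{m',m}$ extracted below is indeed $\langle j,m'|U^{(j)}|j,m\rangle$ of the standard spin-$j$ representation.

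First I would substitute \cref{eq:majdefeq} on both sides of the identity $f_\phi(z_1,z_2)=f_\psi(u^\ast z_1+v^\ast z_2,\,-vz_1+uz_2)$ from \cref{prop:su2action} (with $\ket\phi=U\ket\psi$), expand the two factors $(u^\ast z_1+v^\ast z_2)^{j+m}$ and $(-vz_1+uz_2)^{j-m}$ on the right by the binomial theorem, and multiply them. Since the product is homogeneous of degree $2j$, collecting the power of $z_2$ simultaneously pins down the power of $z_1$; this eliminates one of the two summation indices and leaves a single sum, say over the exponent $a$ of $v^\ast$. Matching the coefficient of $z_1^{j+m'}z_2^{j-m'}$ then expresses $\phi_{m'}^\ast$ as an explicit linear combination of the $\psi_m^\ast$, with coefficients that are finite sums of monomials in $u,u^\ast,v,v^\ast$.

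Taking complex conjugates turns this into $\phi_{m'}=\sum_m U^{(j)}_{m',m}\psi_m$ (the real prefactor $\sqrt{\binom{2j}{j-m}/\binom{2j}{j-m'}}$ passing through unchanged), and the remaining work is purely combinatorial bookkeeping: a shift of the summation variable together with the symmetry $\binom{n}{k}=\binom{n}{n-k}$ brings the coefficient into the precise form of \cref{eq:rep_polysum}, while the requirement that both binomial factors be nonzero yields the stated summation limits $\max\{0,m'-m\}\le a\le\min\{j-m,j+m'\}$. I expect the main obstacle to be exactly this combinatorial consistency --- organizing the double binomial sum, the reindexing, and in particular the complex-conjugation step, which is needed because $f_\psi$ is built from $\psi_m^\ast$ rather than $\psi_m$, so a naive coefficient comparison returns the conjugated matrix entries. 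There is no analytic difficulty: that $U^{(j)}$ is a well-defined matrix realizing the rotation is already guaranteed by \cref{prop:su2action}, and the proposition only records its closed form.
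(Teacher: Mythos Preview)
Your proposal is correct and follows exactly the route the paper indicates: the paper does not give a standalone proof of \cref{prop:su2_reps} but simply states that $U^{(j)}$ ``can be read off from the transformation laws for monomials appearing in \cref{eq:polyaction}'', and your plan---expand $(u^\ast z_1+v^\ast z_2)^{j+m}(-vz_1+uz_2)^{j-m}$ binomially, match the coefficient of $z_1^{j+m'}z_2^{j-m'}$, and conjugate to pass from $\psi_m^\ast$ to $\psi_m$---is precisely that reading-off made explicit. Your identification of the conjugation step as the only subtle point is apt; once the surviving index is taken to be the exponent $a$ of $v^\ast$ \emph{after} conjugation, the binomials and the summation range fall out exactly as in \cref{eq:rep_polysum}.
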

For instance, the low spin polynomial parametrizations (in the basis of $\{\ket{j,m=j}, \ldots, \ket{j,m=-j}\}$) are:

\begin{equation-aligned}\label{eq:polyrepsu2}
    U^{(j=\frac12)}=&\begin{pmatrix}
        u & -v^\ast \\ v & u^\ast
    \end{pmatrix}~,\\
    U^{(j=1)}=&\left(
\begin{array}{ccc}
 u^2 & -\sqrt{2} u v^\ast & v^{\ast 2} \\
 \sqrt{2} u v & u u^\ast- v v^\ast  & - \sqrt2 u^\ast v^\ast
   \\
 v^2 & \sqrt{2} u^\ast v & u^{\ast 2} \\
\end{array}
\right)~.
\end{equation-aligned}
This form is easier to algebraically manipulate than the standard exponential parametrization (\cref{eq:expsu2param}); it contains the same information, but in explicit form. Since for characteristic functions (\cref{def:characfun}) one has to calculate expressions of form $\langle \psi \vert U\vert\psi\rangle$, and compare them to {ascertain} state reachability (\cref{prop:iman1} and \ref{prop:stochastic_trafo}), this is a significant simplification: \emph{equality of polynomials can be determined by comparing their coefficients} (after taking into account the constraint $\abs{u}^2+\abs{v}^2=1$), while expressions stemming from matrix exponents contain unwieldy trigonometric expressions (see \cref{app:exppar} for an example).

\subsection{Semidefinite programs}
Semidefinite programs are a method for solving a class of convex optimization problems; with their usage it is possible to efficiently numerically optimize a linear function (e.g. maximize the expectation value or probability) over a subset of positive semidefinite matrices (e.g. quantum states with additional constraints, or quantum channels by their dual representation). We will use such methods in later parts of the text; in particular, the result of \cref{prop:fid} with some modifications allows for maximizing the fidelity over a subset of $\SUT$-covariantly reachable states.\\
{For a general definition of semidefinite optimization problem, let us consider two complex vector spaces $\mathcal{X}=\CC^n$ and $\mathcal{Y}=\CC^m$}.  {Now, let $\xi$ transform hermitian operators acting on $\mathcal{X}$ to hermitian operators on $\mathcal{Y}$ and  fix two hermitian operators $C=C^\dagger$  and $D=D^\dagger$ (acting on $\mathcal{X}$ and $\mathcal{Y}$, respectively). With this,} we define the primal problem

\begin{equation-aligned}\label{eq:primal_SDP}
    \max_X \quad &\tr(CX), \\
	\text{s.t.} \quad & \xi(X)=D, \\
	& X \succcurlyeq 0.
\end{equation-aligned}
{Here, $X\succcurlyeq0$
signifies that $X$ is positive semidefinite.
The quality of the optimization of the primal problem can be ensured by the use of a dual problem. It provides an upper bound on the optimal solution. For many classes of problems, the upper bound can be proven to be exact \cite{michalek2021invitation}.

There are many applications of semidefinite programs in quantum information, e.g., it is possible to determine the fidelity between two states, as independently discovered in \cite{Wat2012,Killoran2012}:
\begin{prop}\label{prop:fid}
	The fidelity between two states $ \rho,\sigma\in\operatorname{Pos}(\HH),~\HH=\CC^n $, given by  \begin{equation}\label{key}
		\mathcal{F}(\rho,\sigma)=\tr(\sqrt{\sqrt{\sigma}\rho\sqrt{\sigma}}),
	\end{equation} can be computed with a semidefinite program \cite{Wat2012} with the following primal problem:
	\begin{equation}\label{key}
		\begin{aligned}
			\max_X \quad & \frac{1}{2}\tr(X+X^\dagger), \\
			\text{s.t.}\quad & 
			\begin{pmatrix}
				\rho & X \\ X^\dagger & \sigma
			\end{pmatrix}\succcurlyeq 0, \\
			& X\in \CC^{n\times n}.
		\end{aligned}
	\end{equation}
\end{prop}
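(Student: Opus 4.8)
The plan is to eliminate the free matrix variable $X$ in favour of a contraction and then recognize the resulting maximization as the variational formula for the trace norm. First I would characterize the feasible set of the primal program. For $\rho,\sigma\succcurlyeq0$ the $2n\times2n$ block matrix $\left(\begin{smallmatrix}\rho & X\\ X^\dagger & \sigma\end{smallmatrix}\right)$ appearing in the constraint is positive semidefinite if and only if $X=\sqrt{\rho}\,K\,\sqrt{\sigma}$ for some $K$ with $\no{K}_\infty\le1$. When $\rho$ and $\sigma$ are invertible this is an immediate Schur-complement computation: $\sigma-X^\dagger\rho^{-1}X\succcurlyeq0$ rearranges to $\no{\rho^{-1/2}X\sigma^{-1/2}}_\infty\le1$. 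The general (rank-deficient) case follows either by restricting to the supports of $\rho$ and $\sigma$, or by continuity, replacing $\rho,\sigma$ with $\rho+\varepsilon\id,\sigma+\varepsilon\id$ and letting $\varepsilon\to0^+$, since the optimal value depends continuously on the data. This step rewrites the feasible set as $\{\sqrt{\rho}\,K\,\sqrt{\sigma}:\no{K}_\infty\le1\}$, which is compact, so the supremum is attained.

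Second, I would rewrite the objective. Because $X$ is allowed to be complex, $\tr(X^\dagger)=\overline{\tr(X)}$, so $\tfrac12\tr(X+X^\dagger)=\operatorname{Re}\tr(X)$. Substituting $X=\sqrt{\rho}\,K\,\sqrt{\sigma}$ and using cyclicity of the trace gives $\operatorname{Re}\tr(X)=\operatorname{Re}\tr(KM)$ with $M:=\sqrt{\sigma}\sqrt{\rho}$. Hence the value of the primal program equals $\max\{\operatorname{Re}\tr(KM):\no{K}_\infty\le1\}$. Finally, I would invoke the duality between the Schatten $\infty$- and $1$-norms, $\max\{\operatorname{Re}\tr(KM):\no{K}_\infty\le1\}=\no{M}_1$, the maximum being attained at $K=W^\dagger$ for a polar decomposition $M=W\abs{M}$ (a partial isometry in the singular case). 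Then $\no{M}_1=\no{\sqrt{\sigma}\sqrt{\rho}}_1=\tr\sqrt{\sqrt{\rho}\,\sigma\,\sqrt{\rho}}=\mathcal{F}(\rho,\sigma)$, where the last step uses the standard symmetry $\tr\sqrt{\sqrt{\rho}\,\sigma\,\sqrt{\rho}}=\tr\sqrt{\sqrt{\sigma}\,\rho\,\sqrt{\sigma}}$ of the fidelity.

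I expect the only genuinely delicate point to be the rank-deficient case of the block-positivity lemma (ensuring $X$ automatically lives between the supports of $\rho$ and $\sigma$); everything else is bookkeeping with the trace and the Schatten-norm pairing, and attainment of the maximum is guaranteed by finite-dimensional compactness. As an alternative route one could instead write down the dual semidefinite program explicitly, verify weak duality directly, and then exhibit matching primal and dual optimizers (with strong duality following from Slater's condition), but the factorization argument above is considerably shorter.
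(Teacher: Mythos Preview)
Your argument is correct and is essentially the standard proof (the factorization $X=\sqrt{\rho}\,K\,\sqrt{\sigma}$ with $\|K\|_\infty\le1$, followed by the Schatten $1$--$\infty$ duality to recover $\|\sqrt{\sigma}\sqrt{\rho}\|_1$). Note, however, that the paper does not actually prove this proposition: it is stated as a known result with citations to Watrous and Killoran, so there is no in-paper proof to compare against; your route coincides with the one in those references.
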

{Here, the standard form of the semidefinite program (\cref{eq:primal_SDP}) has been replaced with a simplified one, better suited for this particular case.} {This semidefinite optimization will be used later in \cref{subsec:tramix} for determination of maximum fidelity achievable with rotationally covariant operations.}

\section{\label{sec:covtrafos} {Application to} $\operatorname{SU}(2)$-covariant transformations}

\subsection{Pure state transformations}
The $\SUT$ polynomial representation introduced in \cref{subsec:majorana} can be applied to define the characteristic function in purely analytical terms. Recall that for  a state $\ket\psi=\sum_{m=-j}^j \psi_m \ket{j,m}$ contained in a representation corresponding to the total angular momentum $j$ there is an associated polynomial $f_\psi (z_1, z_2) = \sum_{m=-j}^j f_m z_1^{j+m}z_2^{j-m}$. The group $\SUT$ acts on the polynomials by a linear transformation of the vector of variables $(z_1, z_2)$ -- see \cref{eq:polyaction}. The transformed polynomial $f'$ corresponds to a state $U\ket\psi$ -- thus, the form of the characteristic function $\bra\psi U \ket\psi$ can be read out from the transformed polynomial.
\begin{prop}
The $\SUT$ characteristic function of a state with definite spin $j$,
\begin{equation-aligned}
    \ket\psi = \sum_{m=-j}^j \psi_m \ket{j,m},
\end{equation-aligned}
can be parameterized via two complex variables $u, v$ obeying $\lvert u \rvert^2 + \lvert v \rvert^2=1$ via
\begin{equation-aligned}
    \chi_\psi(u,u^*,v,v^*) = \sum_{m',m\in\{-j,\dots,j\}} U^{(j)}_{m',m}\psi_{m'}^\ast\psi_m,
\end{equation-aligned}
{where the representation $U^{(j)}$ is extracted from the polynomial representation (see \cref{prop:su2_reps} and \cref{eq:rep_polysum}).}
\label{prop:charaspoly}
\end{prop}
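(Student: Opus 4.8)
The plan is to combine \cref{def:characfun} with the polynomial action of $\SUT$ established in \cref{prop:su2action} and \cref{prop:su2_reps}. Starting from $\chi_\psi(g)=\bra\psi U_g\ket\psi$ with $U_g=U^{(j)}$ the spin-$j$ representation of $V\in\SUT$, I would simply insert the expansion $\ket\psi=\sum_m \psi_m\ket{j,m}$ on both sides. The right factor $U^{(j)}\ket\psi=\sum_m \psi_m U^{(j)}\ket{j,m}$ is, by \cref{prop:su2_reps}, the state with coefficients $\sum_m U^{(j)}_{m',m}\psi_m$ in the slot $m'$; pairing this against $\bra\psi=\sum_{m'}\psi_{m'}^\ast\bra{j,m'}$ and using orthonormality of $\{\ket{j,m'}\}$ yields exactly
\begin{equation-aligned}
    \chi_\psi = \sum_{m',m} \psi_{m'}^\ast U^{(j)}_{m',m}\psi_m,
\end{equation-aligned}
which is the claimed formula. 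The dependence on $(u,u^*,v,v^*)$ enters solely through the matrix elements $U^{(j)}_{m',m}$, whose explicit polynomial form is \cref{eq:rep_polysum}; the normalization constraint $|u|^2+|v|^2=1$ is inherited from the defining condition of $\SUT$ in \cref{eq:su2def}.

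The only point requiring a little care is the identification of the polynomial transformation rule of \cref{prop:su2action} with the matrix action on the amplitudes $\psi_m$ — i.e. justifying \cref{prop:su2_reps} in the form needed here. I would make this precise by noting that the map $\ket\psi\mapsto f_\psi$ of \cref{eq:majdefeq} is a linear bijection between the spin-$j$ Hilbert space and the space of bivariate homogeneous polynomials of degree $2j$ (the binomial weights $\sqrt{\binom{2j}{j-m}}$ are nonzero, so the monomials $z_1^{j+m}z_2^{j-m}$ form a reweighted basis in bijection with $\ket{j,m}$). Under this bijection, \cref{prop:su2action} says that the rotation $U$ acts on polynomials by precomposition with $V^\dagger$ acting on $(z_1,z_2)$; expanding $f_\psi(V^\dagger(z_1,z_2)^T)$ back in the monomial basis and reading off coefficients reproduces precisely the matrix $U^{(j)}$ of \cref{eq:rep_polysum}. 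Hence $U_g$ in \cref{def:characfun} is represented by $U^{(j)}$ in the chosen basis, and the substitution above is legitimate.

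I do not expect a genuine obstacle here; the statement is essentially a bookkeeping consequence of the constructions already set up. If anything, the subtlety is purely notational: being consistent about whether $U^{(j)}_{m',m}$ denotes $\langle j,m'|U^{(j)}|j,m\rangle$ (the convention fixed just before \cref{eq:rep_polysum}) or its transpose/conjugate, and making sure the complex conjugation lands on $\psi_{m'}^\ast$ (from the bra) rather than on the matrix. I would therefore present the proof as: (i) recall $\chi_\psi(g)=\bra\psi U^{(j)}\ket\psi$; (ii) invoke \cref{prop:su2_reps} to write the action of $U^{(j)}$ on amplitudes; (iii) expand the inner product in the orthonormal basis $\{\ket{j,m}\}$ and collect terms; (iv) observe that the resulting expression is manifestly a polynomial in $u,u^*,v,v^*$ via \cref{eq:rep_polysum}, with the constraint $|u|^2+|v|^2=1$ from \cref{eq:su2def}. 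This is at most a few lines.
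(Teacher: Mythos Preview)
Your proposal is correct and matches the paper's treatment: the paper does not give a separate proof of this proposition, presenting it as an immediate consequence of \cref{def:characfun} together with the polynomial representation $U^{(j)}$ established in \cref{prop:su2_reps} and \cref{eq:rep_polysum}. Your steps (i)--(iv) are exactly the bookkeeping the paper leaves implicit, and your remark about the linear bijection $\ket\psi\mapsto f_\psi$ is a reasonable elaboration of what the paper takes for granted.
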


If a state is a superposition of states belonging to different irreducible representations, i.e., $\ket\psi = \sum_{j}  \ket{\psi_j}$, the characteristic function $\chi_\psi$ also decomposes:
\begin{equation-aligned}
    \chi_\psi = \sum_{j} \chi_{\psi_j}.
\end{equation-aligned}

If the maximal total angular momentum appearing in $\ket\psi$ is denoted by $J$, the characteristic function of a state $\ket\psi$ is in this context a polynomial of degree $2J$ in $ u,u^\ast,v,v^\ast $ -- the variables characterizing a $\SUT$ group element.
\begin{ex}
	The characteristic function of the state $ \ket\psi=\ket{j=1,m=1} $ is given by 
	\begin{equation}\label{key}
		\chi_\psi(u,u^\ast,v,v^\ast)=u^2.
	\end{equation}
	Similarly, a superposition of multiple irreps  $ \ket\phi=\frac{1}{\sqrt{2}}\left(\ket{j=1,m=1}+\ket{j=0,m=0}\right) $ corresponds to a sum of characteristic functions,
	\begin{equation}\label{key}
		\chi_\phi=\frac12(\chi_{\ket{j=1,m=1}}+\chi_{\ket{j=0,m=0}})=\frac{1}{2}(u^2+1).
	\end{equation}
\end{ex}

{For a strict mathematical formulation of $\SUT$ characteristic functions through their polynomial parametrizations, some care has to be taken. In particular, the exact form of the polynomial might depend on the method of calculation, and a direct comparison of the coefficients is insufficient to determine the equality of characteristic functions (which is needed for \cref{prop:directreach} and subsequent results). Evidently, $\chi_\psi$ and $\chi_\psi+f \times(\abs{u}^2+\abs{v}^2-1)$ are the same functions (with the constraint $\abs{u}^2+\abs{v}^2=1$), but in general are different polynomials. Another problem is practical; when searching for auxiliary states $\ket\xi, \ket\sigma$ {(appearing in \cref{prop:iman1} and \cref{corr:stochastic_trafo})}, the dimension of the Hilbert space must be bounded for computational tractability. A detailed analysis of these problems is presented in the \cref{app:proof_max_j}; here we present the results most important for the understanding of the method.}

The first problem is solved by comparing coefficients of a definite representative of a given characteristic function:
\begin{defi}
    For a given polynomial $\chi$ in variables $u, u^*, v, v^*$, let $\pi[\chi]$ denote a coefficient list 
    \begin{equation-aligned}
        \pi[\chi] = (\tilde\chi_{a,b,c,d})_{a,b,c,d\in \NN},
    \end{equation-aligned}
    such that the value of 
    \begin{equation}
        \tilde \chi=\sum_{a,b,c,d\in\NN} \tilde\chi_{a,b,c,d} u^a u^{*b} v^c v^{*d}
    \end{equation}
    is equal to $\chi$ whenever $\abs{u}^2+\abs{v}^2=1$ and $\tilde\chi$ is not divisible by $\abs{u}^2+\abs{v}^2-1$. This operation is linear in $\chi$, and can be thought of as a projection of coefficients onto a fixed subspace.
    \label{def:coefflist}
\end{defi}

With this definition, we can compare polynomials through the coefficient list provided by $\pi$. Then, the direct application of \cref{prop:iman1} yields the following new result:
\begin{prop}\label{prop:poly_det}
	The deterministic  $\SUT$-covariant  transformation $ \ket\psi\mapsto\ket\phi $ is possible if and only if there is a state $ \ket\xi $ such that
	\begin{equation}\label{key}
		\pi[{\chi}_\psi]=\pi[{\chi}_{\xi\otimes\phi}]~.
	\end{equation}
\end{prop}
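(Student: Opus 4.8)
The plan is to obtain \cref{prop:poly_det} as a direct translation of the abstract criterion of \cref{prop:iman1} into the polynomial language set up in \cref{prop:charaspoly} and \cref{def:coefflist}. \cref{prop:iman1} already states that a deterministic $G$-covariant map $\ket\psi\mapsto\ket\phi$ exists if and only if there is a state $\ket\xi$ with $\chi_\psi(g)=\chi_\xi(g)\chi_\phi(g)$ for all $g\in G$; specializing to $G=\SUT$, what remains is to recognize the right-hand side as the characteristic function of a single product state and to replace the pointwise equality of functions on the group by equality of canonical coefficient lists. Two facts are needed for this: first, that $\chi_{\xi\otimes\phi}=\chi_\xi\cdot\chi_\phi$; second, that the operation $\pi[\cdot]$ faithfully detects equality of $\SUT$ characteristic functions as functions on the group.

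For the first fact I would note that the joint system $\ket\xi\otimes\ket\phi$ carries the tensor-product representation $g\mapsto U^{(\xi)}_g\otimes U^{(\phi)}_g$, so that $\chi_{\xi\otimes\phi}(g)=\bra{\xi\otimes\phi}\bigl(U^{(\xi)}_g\otimes U^{(\phi)}_g\bigr)\ket{\xi\otimes\phi}=\bra{\xi}U^{(\xi)}_g\ket{\xi}\,\bra{\phi}U^{(\phi)}_g\ket{\phi}=\chi_\xi(g)\chi_\phi(g)$, which is exactly the product appearing in \cref{prop:iman1}. By \cref{prop:charaspoly} each factor is a genuine polynomial in $u,u^\ast,v,v^\ast$ once $\SUT$ is parametrized as in \cref{eq:su2param}, so the tensor-product bookkeeping on the state side coincides with ordinary multiplication of polynomials on the characteristic-function side. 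For the second fact I would invoke the defining property of $\pi$ from \cref{def:coefflist}: $\pi[\chi]$ is the coefficient list of the representative of $\chi$ that is not divisible by $\abs{u}^2+\abs{v}^2-1$, so two polynomials define the same function on the constraint surface $\{\abs{u}^2+\abs{v}^2=1\}$ — equivalently, the same characteristic function on $\SUT$ — if and only if $\pi$ assigns them equal lists. With these in hand the proof is the chain of equivalences: a deterministic $\SUT$-covariant map $\ket\psi\mapsto\ket\phi$ exists $\iff$ (by \cref{prop:iman1}) there is $\ket\xi$ with $\chi_\psi(g)=\chi_\xi(g)\chi_\phi(g)$ for all $g$ $\iff$ there is $\ket\xi$ with $\chi_\psi=\chi_{\xi\otimes\phi}$ as functions on $\SUT$ $\iff$ there is $\ket\xi$ with $\pi[\chi_\psi]=\pi[\chi_{\xi\otimes\phi}]$.

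The main obstacle is not in this chain but in making the use of $\pi$ rigorous, i.e.\ in showing that the non-divisible representative exists and is \emph{unique}, so that $\pi$ is well defined (and linear and computable) in the first place. This amounts to knowing that the polynomials vanishing on the real three-sphere $\{\abs{u}^2+\abs{v}^2=1\}\subset\CC^2$ form precisely the ideal generated by $\abs{u}^2+\abs{v}^2-1$ — a real-Nullstellensatz-type statement — which is exactly the subtlety flagged around \cref{def:coefflist} and is settled in \cref{app:proof_max_j}; I would simply cite it. A secondary point, also handled there, is that for \cref{prop:poly_det} to be algorithmically useful one needs an a priori bound on the largest spin $j$ that may appear in $\ket\xi$, so that the search ranges over a finite-dimensional Hilbert space. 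Neither point affects the logical content of the proposition, which follows from \cref{prop:directreach} and \cref{prop:iman1} once the polynomial dictionary is in place.
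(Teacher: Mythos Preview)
Your proposal is correct and follows essentially the same route as the paper's proof in \cref{app:proof_det}: invoke \cref{prop:iman1}, identify $\chi_\xi\chi_\phi$ with $\chi_{\xi\otimes\phi}$, and then pass to canonical coefficient lists via $\pi$. Your discussion of the well-definedness of $\pi$ and the finite-dimensional search for $\ket\xi$ is more detailed than the paper's (which simply defers to \cref{app:proof_max_j}), but the logical skeleton is identical; the closing mention of \cref{prop:directreach} is unnecessary, as only \cref{prop:iman1} is actually used.
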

\begin{proof}
    See \cref{app:proof_det}.
\end{proof}
Similarly, the application of the polynomial description together with \cref{prop:stochastic_trafo} and \cref{corr:stochastic_trafo} can be summarized by the following Proposition.
\begin{prop}\label{prop:poly_stoch}
	There exists a $\SUT$-covariant map transforming $ \ket\psi $ to $ \ket\phi $ with probability $ p $ if and only if there are states $ \ket\xi$ and $\ket\sigma $ such that
	\begin{equation}\label{key}
		\pi[{\chi}_\psi]=p\pi[{\chi}_{\xi\otimes\phi}]+(1-p)\pi[{\chi}_\sigma]~.
	\end{equation}
\end{prop}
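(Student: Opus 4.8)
The plan is to obtain this statement as the polynomial reformulation of \cref{corr:stochastic_trafo}, in exactly the way \cref{prop:poly_det} reformulates \cref{prop:iman1}; indeed the argument is the same, only with a three-term decomposition in place of a two-term one. By \cref{corr:stochastic_trafo}, a $\SUT$-covariant map sending $\ket\psi$ to $\ket\phi$ with probability $p$ exists if and only if there are states $\ket\xi,\ket\sigma$ with $\chi_\psi(g)=p\,\chi_\xi(g)\chi_\phi(g)+(1-p)\chi_\sigma(g)$ for all $g\in\SUT$. Two ingredients turn this into the claimed identity: (i) the product $\chi_\xi(g)\chi_\phi(g)$ equals the characteristic function $\chi_{\xi\otimes\phi}(g)$ of the tensor-product state, and (ii) equality of characteristic functions on the group is equivalent to equality of their canonical coefficient lists produced by $\pi$ (\cref{def:coefflist}).

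For (i), the composite system $\ket\xi\otimes\ket\phi$ carries the representation $U_g\otimes U_g$, so $\chi_{\xi\otimes\phi}(g)=\bra{\xi}U_g\ket{\xi}\bra{\phi}U_g\ket{\phi}=\chi_\xi(g)\chi_\phi(g)$; in the polynomial picture of \cref{prop:charaspoly} this is just the statement that the characteristic polynomial of a product state is the product of the two characteristic polynomials in the common variables $u,u^\ast,v,v^\ast$. For (ii), I would invoke the properties of $\pi$: it is linear, and by construction $\pi[A]=\pi[B]$ precisely when $A$ and $B$ coincide as functions on the constraint surface $\abs{u}^2+\abs{v}^2=1$, which parametrizes $\SUT$. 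Applying $\pi$ to both sides of the identity from \cref{corr:stochastic_trafo} and using linearity gives $\pi[\chi_\psi]=p\,\pi[\chi_{\xi\otimes\phi}]+(1-p)\pi[\chi_\sigma]$; conversely, since $\pi$ detects equality of functions on $\SUT$, the polynomial identity implies the characteristic-function identity, so \cref{corr:stochastic_trafo} applies in the reverse direction as well.

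The step I expect to require the most care — and the one I would delegate to the appendix, in the spirit of the proof of \cref{prop:poly_det} — is making the search over auxiliary states $\ket\xi,\ket\sigma$ effective: a priori these may live in Hilbert spaces of arbitrarily large total angular momentum, so one must show it suffices to restrict to a bounded maximal spin (controlled by the spin content of $\ket\psi$ and $\ket\phi$), after which $\pi$ is applied only to polynomials of bounded degree and the comparison becomes a finite linear condition. A secondary point worth spelling out is the well-definedness of $\pi$ itself: among all polynomials representing a given function on $\abs{u}^2+\abs{v}^2=1$ there is a unique one that is not divisible by $\abs{u}^2+\abs{v}^2-1$ in the appropriate sense, so that $\pi$ is a genuine linear projection and equalities of $\pi$-images can be checked coefficientwise. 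With \cref{app:proof_max_j} supplying these regularity and dimension-bound statements, the proposition follows at once from the reduction (i)--(ii), so the written proof would consist of that reduction together with a pointer to \cref{app:proof_max_j}.
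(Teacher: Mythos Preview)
Your proposal is correct and follows essentially the same approach as the paper: the paper's proof simply states that, analogously to \cref{prop:poly_det}, the characteristic functions are expressed as polynomials and their canonical versions compared. Your write-up is more detailed than the paper's two-line proof (which just points back to the deterministic case), and you invoke \cref{corr:stochastic_trafo} rather than \cref{prop:stochastic_trafo} --- arguably the more precise reference given the form of the statement --- but the underlying argument is identical.
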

\begin{proof}
    Analogously as for the proof of \cref{prop:poly_det}, the characteristic functions appearing in \cref{prop:stochastic_trafo} can be expressed as polynomials and their canonical versions are compared to yield the result.
\end{proof}

The structures presented in \cref{prop:poly_det} and \ref{prop:poly_stoch} can be used directly to prove the existence of a $\SUT$-covariant transformation if the auxiliary states can be guessed. However, they also allow for numerical methods: the constraints (on polynomial coefficients) they impose are linear, and the search spaces are quantum states. This is exactly the formulation allowing for a semidefinite optimization approach:
\begin{prop}\label{prop:maxprobsdp}
    The maximum probability for a $\SUT$-covariant transformation $ \ket\psi\mapsto\ket\phi $ can be determined with the following semidefinite program:
    \begin{equation-aligned}\label{key}
	    \max_{\rho,\sigma} \quad & \tr(\rho),\\
	   	\text{s.t.}\quad &\pi[{\chi}_\psi]=\pi[{\chi}_{\rho\otimes\phi}]+\pi[{\chi}_\sigma],\\
		& \rho\succcurlyeq0,\\
		& \sigma\succcurlyeq0.
	\end{equation-aligned}
The Hilbert space dimension of $\rho$ and $\sigma$ can be constrained by the maximal total spin appearing in $\ket\psi$ and $\ket\phi$ (see \cref{app:proof_max_j}).
\end{prop}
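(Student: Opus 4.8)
The plan is to obtain the semidefinite program directly from \cref{prop:poly_stoch} (equivalently \cref{corr:stochastic_trafo}) by the standard device of absorbing the probability into unnormalized positive operators. Recall that \cref{prop:poly_stoch} asserts that $\ket\psi\mapsto\ket\phi$ with probability $p$ is possible if and only if there are normalized states $\ket\xi,\ket\sigma$ with $\pi[\chi_\psi]=p\,\pi[\chi_{\xi\otimes\phi}]+(1-p)\,\pi[\chi_\sigma]$. The first observation is that $\chi_{\xi\otimes\phi}(g)=\chi_\xi(g)\chi_\phi(g)=\tr(U_g\xi)\,\chi_\phi(g)$ with $\xi=\kb\xi\xi$, so that both $\pi[\chi_{\xi\otimes\phi}]$ and $\pi[\chi_\sigma]$ are \emph{linear} in the density operators $\xi$ and $\sigma=\kb\sigma\sigma$: the map $\pi$ is linear on polynomials (\cref{def:coefflist}), and the characteristic function is linear in the state. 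Substituting $\rho:=p\,\xi$ and renaming $(1-p)\,\sigma\mapsto\sigma$, the constraint becomes $\pi[\chi_\psi]=\pi[\chi_{\rho\otimes\phi}]+\pi[\chi_\sigma]$ with $\rho,\sigma\succcurlyeq0$; since $\tr(\xi)=1$ we get $\tr(\rho)=p$, so maximizing the probability is the same as maximizing $\tr(\rho)$.

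The second step is to verify that optimizing over arbitrary positive semidefinite $\rho,\sigma$ instead of rank-one ones does not change the attainable probability. Here I would invoke the ensemble form of the criterion, \cref{prop:stochastic_trafo}: for spectral decompositions $\rho=\sum_k\lambda_k\kb{\xi_k}{\xi_k}$, $\sigma=\sum_l\mu_l\kb{\omega_l}{\omega_l}$ the polynomial identity becomes $\chi_\psi=\sum_k\lambda_k\chi_{\xi_k}\chi_\phi+\sum_l\mu_l\chi_{\omega_l}$, which is exactly the condition of \cref{prop:stochastic_trafo} for the output ensemble whose ``success'' branches all produce $\ket\phi$ (with total weight $\tr(\rho)$) and whose ``failure'' branches produce the trivial invariant state (writing $\chi_{\omega_l}=\chi_{\omega_l}\cdot 1$, with $1$ the characteristic function of an invariant ancilla). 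Conversely, every ensemble of that shape furnishes a feasible pair $(\rho,\sigma)$ with $\tr(\rho)$ equal to the success probability. Thus the feasible set of the stated program is in correspondence with the $\SUT$-covariant maps realizing $\ket\psi\mapsto\ket\phi$, and the objective $\tr(\rho)$ is precisely the success probability.

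Third, I would observe that no explicit normalization constraint is needed: evaluating the polynomial identity at the group identity — i.e.\ at $u=1,v=0$, a point on $\abs{u}^2+\abs{v}^2=1$ where the representative produced by $\pi$ agrees with $\chi$ — yields $1=\chi_\psi(e)=\tr(\rho)\,\tr(\phi)+\tr(\sigma)=\tr(\rho)+\tr(\sigma)$, so $\tr(\rho)+\tr(\sigma)=1$ and in particular $0\le\tr(\rho)\le1$ automatically. Hence the problem in the statement has a linear objective, a single linear equality constraint on the coefficient lists, and only the positive-semidefinite cone as an inequality constraint, i.e.\ it is a bona fide semidefinite program whose optimum equals the maximal $\SUT$-covariant success probability. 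The residual claim — that the Hilbert-space dimensions of $\rho$ and $\sigma$ can be truncated in terms of the largest total spin in $\ket\psi$ and $\ket\phi$ — follows from degree bookkeeping for the polynomials $\chi$ and is established separately in \cref{app:proof_max_j}.

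The step I expect to demand the most care is the second one: cleanly matching a feasible $(\rho,\sigma)$ to the branch structure of \cref{prop:stochastic_trafo}, in particular arguing that ``failure'' branches may be taken to output the trivial invariant state, so that an arbitrary positive semidefinite $\sigma$ — and not merely operators with a product structure on a bipartite space — is admissible. The dimension truncation is likewise nontrivial, but that part is deferred to the appendix.
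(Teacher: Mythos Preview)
Your proposal is correct and follows the same approach as the paper: the paper's proof is a terse two-sentence statement that the result follows from \cref{prop:poly_stoch} by absorbing the probability into the unnormalized $\rho$ and $\sigma$, with the dimension bound coming from \cref{prop:max_j}. Your argument is a more careful expansion of exactly this idea; in particular, your explicit justification that full-rank $\rho,\sigma$ are admissible via the ensemble form (\cref{prop:stochastic_trafo}) and your observation that the normalization $\tr(\rho)+\tr(\sigma)=1$ is automatic from evaluation at the identity are details the paper leaves implicit.
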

\begin{proof}
This directly follows from \cref{prop:poly_stoch} with the probability of the transformation as being absorbed in the states $\rho$ and $\sigma$, which are no longer normalized. Their support is restricted with respect to \cref{prop:max_j} in order to ensure a finite-dimensional set to optimize over. 
\end{proof}
This observation can be implemented as a numerical procedure; in the following example, such an approach leads to a result consistent with an analytical prediction.

\begin{ex}\label{ex:pure_state_prob}
	The transformation $ \ket\psi\mapsto\ket\phi $ with
	\begin{equation}\label{eq:pure_state_prob_def}
        \begin{aligned}
            \ket\psi&=\frac{1}
            {\sqrt{6}}{\left(\ket{0,0}+\ket{1,0}+2\ket{\frac32,\frac32}\right)},\\ \ket\phi&=\ket{\frac12,-\frac12},
        \end{aligned}
	\end{equation}
	is possible with probability $ p=\frac{1}{3} $ \cite{supp}.
\end{ex}

\subsection{Transformations of mixed states} \label{subsec:tramix}
The methods presented above are well suited for \emph{pure} initial and target states. Here we present the generalization to mixed states: the basic idea is to express the quantum channel by the known form of its Kraus operators (from \cref{prop:kraussu2}) with proper constraints in order to parameterize the end state. Then, an optimization is performed to evaluate the metric of choice: here we use the \emph{maximum fidelity}. The basic problem statement is thus: for a given initial $\rho$ and target $\sigma$, maximize 
\begin{equation}\label{key}
	\mathcal{F}(\EE(\rho),\sigma)=\tr(\sqrt{\sqrt{\sigma}\EE(\rho)\sqrt{\sigma}})
\end{equation}

over all $\SUT$-covariant quantum channels $\EE$. Suppose for a while that the channel $\EE$ is fixed. Then, according to \cref{prop:fid}, the fidelity between $\sigma$ and $\EE(\rho)$ can be found as the solution to the semidefinite optimization problem of 

\begin{equation}\label{zeroth_opt_problem}
	\begin{aligned}
		\max_{X} \quad & \frac{1}{2}\tr(X+X^\dagger), \\
		\text{s.t.}\quad & 
		\begin{pmatrix}
			\sigma & X \\ X^\dagger & \EE(\rho)
		\end{pmatrix}\succcurlyeq 0.
	\end{aligned}
\end{equation}

The further optimization over $\EE$ naively might be understood as a complex nonlinear maximization over the coefficients appearing in \cref{eq:krausformsu2} with the constraints set by \cref{norm_channel}. It does, however, permit a semidefinite relaxation, which can be proven to return a strict solution. 

Using the nomenclature of \cref{eq:krausformsu2}, the output state $ \EE(\rho) $ contains quadratic expressions $ f_{J,\alpha}^{(j^\prime,j)}f_{J,\alpha}^{(k^\prime,k)\ast} $. The output can be linearized by defining the parameter matrices $ F_{J,\alpha}=\vec{f}_{J,\alpha}\vec{f}_{J,\alpha}^{\; \dagger} $
{, and considering optimization over full rank matrices $F_J=\sum_\alpha F_{J,\alpha}$.}
\begin{prop}\label{prop:fid_SDP}
Consider an initial mixed state $\rho$ and a target $\sigma$. The maximal fidelity between the output state $\EE(\rho)$ and the target $\sigma$ optimized over $\SUT$-covariant channel $\EE$ is the optimum of the following semidefinite optimization problem:   
\begin{equation}
	\begin{aligned}
		\max_{X,\{F_J\}} \quad & \frac{1}{2}\tr(X+X^\dagger), \\
		\text{s.t.}\quad & 
		\begin{pmatrix}
			\sigma & X \\ X^\dagger & \EE(\rho,\{F_J\})
		\end{pmatrix}\succcurlyeq 0, \\
		& \xi(\{F_J\})=0, \\
		& F_J\succcurlyeq 0, \;\forall J. \\
	\end{aligned}
\end{equation}
{Here, the linear constraints $\xi$ are the analogues of \cref{eq:normkraus} and \cref{norm_channel} expressed for the matrices $F_J$ instead of vectors $f_{J,\alpha}$.}
\end{prop}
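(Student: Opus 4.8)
\textbf{Proof plan for Proposition \ref{prop:fid_SDP}.}

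The plan is to show that the optimization problem stated in the proposition is equivalent to the joint optimization of the fidelity over $X$ and over all $\SUT$-covariant channels $\EE$, and then to verify that this joint problem is genuinely a semidefinite program (i.e.\ the relaxation $F_{J,\alpha}\mapsto F_J$ is tight). First I would recall from Proposition \ref{prop:kraussu2} that any $\SUT$-covariant channel is built from the Kraus operators $K_{J,M,\alpha}(\{\vec f_{J,\alpha}\})$, with the normalization condition \cref{norm_channel}. Substituting these into $\EE(\rho)=\sum_{J,M,\alpha} K_{J,M,\alpha}\rho K_{J,M,\alpha}^\dagger$ and expanding, the matrix elements of $\EE(\rho)$ depend on the Kraus coefficients only through the products $f_{J,\alpha}^{(j',j)} f_{J,\alpha}^{(k',k)*}$; grouping these into the rank-one matrices $F_{J,\alpha}=\vec f_{J,\alpha}\vec f_{J,\alpha}^{\;\dagger}$, the output state becomes a \emph{linear} function of the collection $\{F_{J,\alpha}\}$, and (since different $\alpha$ enter only additively) in fact a linear function of $F_J=\sum_\alpha F_{J,\alpha}$. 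I would write this linear map explicitly and call it $\EE(\rho,\{F_J\})$. In the same step the normalization \cref{norm_channel} becomes a linear constraint on the $F_J$ — this is the map I would denote $\xi(\{F_J\})=0$ — because $\sum_{J,j',\alpha}|f_{J,\alpha}(j',j)|^2$ is just a sum of diagonal entries of the $F_{J,\alpha}$, hence of $F_J$.

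Next I would argue the two directions of the equivalence. For one direction: given any $\SUT$-covariant $\EE$, take its Kraus decomposition, form $F_J=\sum_\alpha \vec f_{J,\alpha}\vec f_{J,\alpha}^{\;\dagger}\succcurlyeq0$; these satisfy $\xi(\{F_J\})=0$ and reproduce $\EE(\rho)$ as $\EE(\rho,\{F_J\})$, so by Proposition \ref{prop:fid} the inner maximization over $X$ with the block-matrix constraint returns exactly $\mathcal{F}(\EE(\rho),\sigma)$; hence the SDP optimum is at least the covariant-channel optimum. For the converse: given feasible $\{F_J\}$ for the SDP (each $F_J\succcurlyeq0$, $\xi(\{F_J\})=0$), I would diagonalize $F_J=\sum_\alpha \lambda_{J,\alpha}\,\vec g_{J,\alpha}\vec g_{J,\alpha}^{\;\dagger}$ and set $\vec f_{J,\alpha}=\sqrt{\lambda_{J,\alpha}}\,\vec g_{J,\alpha}$; these define Kraus operators via \cref{eq:krausformsu2} whose normalization \cref{norm_channel} holds precisely because $\xi(\{F_J\})=0$, and whose output state is $\EE(\rho,\{F_J\})$. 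So every feasible point of the SDP corresponds to an actual $\SUT$-covariant channel with the same objective value, giving the reverse inequality. Combining, the SDP optimum equals $\max_{\EE}\mathcal{F}(\EE(\rho),\sigma)$; finally, since the block matrix $\begin{pmatrix}\sigma & X\\ X^\dagger & \EE(\rho,\{F_J\})\end{pmatrix}$ is an affine function of $(X,\{F_J\})$, the constraints $F_J\succcurlyeq0$ and $\xi(\{F_J\})=0$ are affine, and the objective $\tfrac12\tr(X+X^\dagger)$ is linear, the whole problem is a bona fide semidefinite program.

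The main obstacle I anticipate is the tightness of the rank relaxation: a priori one optimizes over rank-one matrices $F_{J,\alpha}$ (and arbitrarily many multiplicities $\alpha$), whereas the SDP optimizes over a single positive semidefinite $F_J$ per $J$ of unconstrained rank. The key point making this work is that (i) the output state $\EE(\rho)$ and the normalization both depend on the Kraus data only through $F_J=\sum_\alpha \vec f_{J,\alpha}\vec f_{J,\alpha}^{\;\dagger}$, and (ii) every $F_J\succcurlyeq0$ admits a (non-unique) decomposition into a sum of rank-one terms, which can always be realized by introducing enough multiplicity copies $\alpha$ — so no generality is lost and no spurious solutions are gained. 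I would also need to double-check the bookkeeping that turns \cref{norm_channel} into the linear map $\xi$ acting on the $F_J$ (in particular that the sum over $j'$ and $\alpha$ on the left of \cref{norm_channel}, for each fixed $j$, is linear in the diagonal of $F_J$), and that the dimension of the relevant Hilbert space — hence the size of the matrices $F_J$ — can be taken finite; the latter follows from the same truncation argument used for the pure-state case (Appendix \ref{app:proof_max_j} / Proposition \ref{prop:max_j}), since only finitely many irreps $j'$ can be coupled to the irreps appearing in $\rho$ and $\sigma$ by a fixed $J$ while contributing to the fidelity with $\sigma$.
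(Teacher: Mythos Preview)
Your proposal is correct and follows essentially the same approach as the paper's proof in \cref{app:proof_fid}: linearize the output state in $F_J=\sum_\alpha \vec f_{J,\alpha}\vec f_{J,\alpha}^{\;\dagger}$, show the rank relaxation is tight by spectrally decomposing each feasible $F_J$ back into rank-one pieces, and argue that the search space is finite-dimensional. The only minor discrepancy is that the paper obtains the finiteness bound $J\le\jmax{\rho}+\jmax{\sigma}$ directly from the Clebsch--Gordan selection rules in the Kraus operators \eqref{eq:krausformsu2}, rather than by invoking \cref{prop:max_j} (which pertains to the characteristic-function setting, not the Kraus parametrization).
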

\begin{proof}
    See \cref{app:proof_fid}.
\end{proof}

If the target state $ \sigma=\kb{\sigma}{\sigma} $ is pure, the semidefinite program can be simplified. The pure state fidelity is $ \mathcal{F}(\EE(\rho),\sigma)= \tr(\sqrt{\bra{\sigma}\EE(\rho)\ket{\sigma}})=\sqrt{\bra{\sigma}\EE(\rho)\ket{\sigma}} $ and  the squared fidelity $ \mathcal{F}^2(\EE(\rho),\sigma)=\bra{\sigma}\EE(\rho)\ket{\sigma}=\tr(\sigma\EE(\rho)) $ can be optimized instead.
\begin{prop}\label{prop:fid_SDP_pure}
For any mixed initial state $\rho$ and pure target state $\sigma$, the maximum attainable fidelity between $\sigma$ and $\EE(\rho)$ optimized over $\SUT$-covariant channels $\EE$ is the optimum of the following semidefinite optimization problem:
\begin{equation}
	\begin{aligned}
		\max_{\{F_J\}} \quad & \tr(\sigma\EE(\rho)), \\
		\text{s.t.}\quad & \xi(\{F_J\})=0, \\
		& F_J\succcurlyeq 0, \;\forall J. \\
	\end{aligned}
\end{equation}
\end{prop}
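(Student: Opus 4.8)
The plan is to obtain this statement as a specialization of \cref{prop:fid_SDP} to a pure target, where the block-matrix structure of \cref{prop:fid} collapses. Recall the two ingredients already established. From \cref{prop:kraussu2} together with the linearization introduced just before \cref{prop:fid_SDP}, a $\SUT$-covariant channel is completely specified by the Gram matrices $F_J=\sum_\alpha \vec f_{J,\alpha}\vec f_{J,\alpha}^{\,\dagger}$, and conversely every collection of positive semidefinite $F_J$ satisfying the linear normalization constraint $\xi(\{F_J\})=0$ — the rewriting of \cref{eq:normkraus}/\cref{norm_channel} in terms of the matrices $F_J$ — arises from an admissible family of Kraus vectors after, if necessary, enlarging the range of the multiplicity index $\alpha$. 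Since each $K_i$ in \cref{eq:krausformsu2} is linear in the entries of $\vec f_{J,\alpha}$ and $\rho$ is fixed, $\EE(\rho)=\sum_i K_i\rho K_i^\dagger$ is a Hermitian quadratic form in those entries, and substituting $F_J=\vec f_{J,\alpha}\vec f_{J,\alpha}^{\,\dagger}$ renders $\EE(\rho)=\EE(\rho,\{F_J\})$ an affine-linear function of $\{F_J\}$. This is exactly what is used in the proof of \cref{prop:fid_SDP} (see \cref{app:proof_fid}); the pure-target case reuses it unchanged.

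Next I would invoke the elementary simplification of the fidelity for a rank-one target. When $\sigma=\kb{\sigma}{\sigma}$ is pure, $\sqrt\sigma=\sigma$ and $\sqrt\sigma\,\EE(\rho)\,\sqrt\sigma=\bra{\sigma}\EE(\rho)\ket{\sigma}\,\sigma$ is a nonnegative multiple of a rank-one projector, so
\begin{equation*}
  \mathcal F(\EE(\rho),\sigma)=\tr\!\left(\sqrt{\bra{\sigma}\EE(\rho)\ket{\sigma}\,\sigma}\right)=\sqrt{\bra{\sigma}\EE(\rho)\ket{\sigma}}=\sqrt{\tr\!\big(\sigma\,\EE(\rho,\{F_J\})\big)}.
\end{equation*}
By the previous paragraph, $\tr(\sigma\,\EE(\rho,\{F_J\}))$ is linear in $\{F_J\}$, and it is nonnegative on the feasible set. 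Because $t\mapsto\sqrt t$ is strictly increasing on $[0,\infty)$, maximizing $\mathcal F$ over $\SUT$-covariant channels is equivalent to maximizing the linear objective $\tr(\sigma\,\EE(\rho))$ over the same feasible set $\{F_J\succcurlyeq0,\ \xi(\{F_J\})=0\}$, the two optima being related by a square root. Hence the displayed program is a genuine semidefinite program whose optimal value equals $\mathcal F_{\max}^2$; equivalently one may append $\sqrt{\,\cdot\,}$ to the objective without altering the optimizer, so the maximal attainable fidelity is its square root. The auxiliary variable $X$ and the block-positivity constraint of \cref{prop:fid_SDP} disappear because \cref{prop:fid} reduces to the scalar expression above.

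The only step requiring care — and the main obstacle — is the tightness of the linearization: one must verify that the feasible set $\{\{F_J\}:F_J\succcurlyeq0,\ \xi(\{F_J\})=0\}$ neither over- nor under-represents the set of $\SUT$-covariant channels acting on the supports of $\rho$ and $\sigma$. Under-representation is excluded since every such channel produces admissible $F_J$; over-representation is excluded since any positive semidefinite $F_J$ factorizes as a sum of Gram terms $\vec f_{J,\alpha}\vec f_{J,\alpha}^{\,\dagger}$ with $\alpha$ ranging over finitely many values (bounded by the relevant matrix dimension), and the linear normalization constraint is invariant under this factorization. This argument has already been carried out for \cref{prop:fid_SDP} in \cref{app:proof_fid}, so for the present proposition the remaining work is only bookkeeping: specializing that argument and checking the elementary identity for $\mathcal F$ displayed above.
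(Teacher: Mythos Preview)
Your proposal is correct and follows essentially the same route as the paper: the text immediately preceding \cref{prop:fid_SDP_pure} observes that for pure $\sigma$ one has $\mathcal F(\EE(\rho),\sigma)=\sqrt{\tr(\sigma\EE(\rho))}$, so the squared fidelity can be optimized directly, and the SDP structure (linearity in $\{F_J\}$, tightness of dropping the rank constraint, finiteness of the feasible set) is inherited from \cref{prop:fid_SDP} via \cref{app:proof_fid}. Your write-up makes explicit what the paper leaves implicit --- in particular the monotonicity argument and the disappearance of the block variable $X$ --- but the underlying argument is the same.
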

This observation can be implemented numerically -- as an illustration, let us consider the states defined in \cref{eq:pure_state_prob_def}.
\begin{ex}
	With the states defined as in \cref{eq:pure_state_prob_def} of \cref{ex:pure_state_prob}, the maximum achievable fidelity \cite{supp} via $\SUT$-covariant channels is
	\begin{equation}\label{key}
		\mathcal{F}(\EE_{opt}(\kb{\psi}{\psi}),\kb{\phi}{\phi})\approx 0.93~.
	\end{equation}
	This matches the earlier observation that the transformation $ \ket\psi\mapsto\ket\phi $ is not possible deterministically. 
\end{ex}
Here, we can also determine the fidelities for transformations where the total angular momentum is increased, whereas the characteristic function approach would always return a vanishing probability due to \cref{prop:max_j}: any spin increase is forbidden in the context of pure state transformations with postselection on the target state. {The fidelity can still be nonzero exactly because the optimal output state is not pure.}
\begin{ex}
\label{ex:maxfixfinalex}
	The maximum fidelity for a $\SUT$-covariant transformation $ \ket\psi=\ket{\frac32,\frac32}\mapsto\ket\phi=\ket{2,2} $ numerically converges to \cite{supp}
	\begin{equation}\label{key}
        \mathcal{F}\left(\EE_{opt}\left(\kb{\psi}{\psi}\right),\kb{\phi}{\phi}\right)=\sqrt{\frac45}\approx 0.89,
	\end{equation}
	with 
    \begin{equation}
     \EE_{opt}\left(\kb{\psi}{\psi}\right)={\frac45\kb{2,2}{2,2}+\frac15\kb{2,1}{2,1}}.
     \end{equation}
\end{ex}

\section{\label{sec:interferometer}Phase estimation experiment}
\begin{figure}
    \centering
    \includegraphics[width=\linewidth]{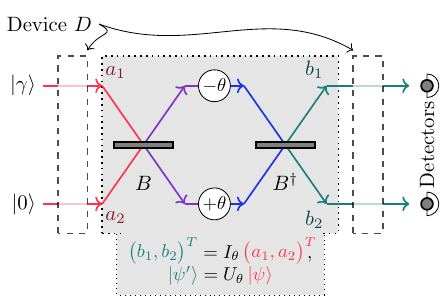}
    \caption{An unknown phase shift $\theta$ can be determined by measuring difference of the photon numbers at the output of the interferometer. The accuracy of phase estimation can be increased by the action of a device $D$ transforming the initial state $\ket\gamma\otimes\ket0$ into a more sensitive one $\ket\gamma\otimes\ket\tau$. Since \emph{the device action commutes with the interferometer}, the preparation stage can be replaced by postprocessing by the same device, with the same metrological improvement.}
    \label{fig:interf}
\end{figure}
Consider a phase estimation experiment pictured in \cref{fig:interf}. Here, two quantum light modes are transformed by an interferometric setup corresponding to an operation $\mathcal{U}_\theta$ and a device $D$ in two possible orders: $D \circ \mathcal{U}_\theta$ and $\mathcal{U}_\theta \circ D$. Subsequently, the photon numbers of the two output modes are measured. In the linear optics description, the action of a beamsplitter
\begin{equation}
    B=\frac{1}{\sqrt2}\begin{pmatrix}
    1 & i\\ i & 1
    \end{pmatrix}
\end{equation}
can be described as Heisenberg transformation of the annihilation operators:
\begin{equation-aligned}
    \begin{pmatrix}a_1\\a_2\end{pmatrix}\mapsto B\begin{pmatrix}a_1\\a_2\end{pmatrix},
\end{equation-aligned}
while the phase shift operation
\begin{equation}
    P_\theta=\begin{pmatrix}
        e^{i\theta} & 0 \\ 0 & e^{-i\theta}
    \end{pmatrix}
\end{equation}
corresponds to straightforward multiplication, $(a_1,a_2)\mapsto (e^{i\theta} a_1, e^{-i\theta} a_2)$. The overall transformation $I_\phi=B\cdot P_\theta \cdot B^\dagger$ leads to output modes
\begin{equation}
    \begin{pmatrix} b_1 \\ b_2 \end{pmatrix} = \overbrace{\begin{pmatrix}
        \cos\theta&-\sin\theta\\\sin\theta&\cos\theta
    \end{pmatrix}}^{I_\theta} \begin{pmatrix} a_1 \\ a_2 \end{pmatrix}
\end{equation}

In the Schr\"odinger view, the mode transformations correspond to a unitary operator transforming the initial state $\ket\psi$:
\begin{equation-aligned}
    \ket\psi \mapsto \underbrace{e^{\theta(a_2^\dagger a_1-a_1^\dagger a_2)}}_{U_\theta} \ket\psi.
\end{equation-aligned}
The interferometer action, described by a $\UO$ subgroup of $\SUT$ (which describes mode transformations more generally) formed by the matrices $I_\theta$, acts on quantum states via its {unitary representation $U_\theta$,  on the level of mixed states described by $\mathcal{U}_\theta(\rho) =U_\theta \rho U_\theta^\dagger$}. Interestingly, there exist operations $D$ \emph{commuting with all $\mathcal{U}_\theta$}: they can be placed before or after the interferometer, with the same overall operation of the interferometer and device in total. With such operations, \emph{the preparation step is equivalent to postprocessing}, and they can be made to perform useful work, as shown in the following example.

\begin{ex}
    Consider an interferometer with its input arms initialized in the coherent and vacuum state, respectively: $\ket\psi = \ket{\gamma}\ket{0}$. There exists a device $D$, for which $\mathcal{U}_\theta\circ D=D \circ \mathcal{U}_\theta$, transforming the state $\ket\psi$ with nonzero probability $p$ into
    \begin{equation}
\ket\phi=\ket{\gamma(1-\varepsilon)}\ket{\tau},
\label{eq:interfketphidef}
\end{equation}
where $\varepsilon>0$ and the $\ket\tau$ is a two-photon approximation of the squeezed vacuum,
\begin{equation-aligned}
    \ket\tau = \cos\tau\ket{0} -\sin\tau\ket{2}.
    \label{eq:interfkettaudef}
\end{equation-aligned}
\end{ex}
\begin{proof}
The possibility of state transformation is ascertained by the structure of the $\UO$ characteristic function associated with the interferometer action -- see \cref{corr:stochastic_trafo} and \cref{app:interf}. In this case, the characteristic function of a state $\ket\psi$ takes the form of
\begin{equation-aligned}
    \chi_\psi(\theta) = \langle\psi\vert U_\theta\vert\psi\rangle = \sum_{k\in \ZZ} C_k e^{i k \theta},
\end{equation-aligned}
where the $C_k$ have to be nonnegative and sum up to 1. The interconversion to a state $\ket\phi$ with the characteristic function of $\chi_\phi = \sum_{k\in\ZZ} e^{i k \theta} P_k$ is possible with probability $p$ if  $C_k=p P_k+(1-p)Q_k$, where $Q_k\ge 0$. This is possible in this case:
 the characteristic functions of $\ket\psi$ and $\ket\phi$ read (with $z:=\exp(i\theta)$)
\begin{equation-aligned}
    \chi_{\psi(\gamma)} &= \sum_{k\in\ZZ} z^k \overbrace{e^{-\abs{\gamma}^2}I_k(\abs{\gamma}^2)}^{C_k},\\
    \chi_\phi &= \chi_{\psi(\gamma(1-\varepsilon))} \times (A_{-4}z^{-4}+\ldots+A_4 z^4)\\
    &=\sum_{k\in\ZZ} P_k z^k,
\end{equation-aligned}
where $I_k$ is the modified Bessel function and the coefficients $A_l$ ($l=-4,\ldots,4$) are expressions involving $\gamma, \tau$ and $\varepsilon$. By analyzing the asymptotic behavior of $P_k$ and $C_k$, one can show that for any $\varepsilon>0$,
\begin{equation-aligned}
    p:=\min_{k\in\ZZ} \frac{C_k}{P_k}>0.
\end{equation-aligned}
The details of the calculations, including the definition of $A_l$ and proof that $p>0$, are presented in \cref{app:interf} and \cite{supp}.
\end{proof}

Importantly, the state $\ket\phi$ offers a metrological advantage over the coherent input $\ket\psi=\ket\gamma\ket0$: the vacuum state $\ket0$ is transformed to an approximation of the squeezed vacuum state $\ket\tau$. This is visible in the output of the entire system: with the difference of the output arm photon numbers defined as
\begin{equation-aligned}
    \delta N=b_1^\dagger b_1 - b_2^\dagger b_2,
    \label{eq:deltan}
\end{equation-aligned}
the phase estimation uncertainty $\Delta\theta$ is \cite{scully1997quantum}
\begin{equation}
    \Delta\theta=\sqrt{\Delta^2 (\delta N)}\abs{\dv{\expval{\delta N}}{\theta}}^{-1},
\end{equation}
where $\Delta^2 (\delta N)=\expval{\delta N^2}-\expval{\delta N}^2$ is the photon number variance. For both states $\ket\psi$ and $\ket\phi$, the maximum accuracy is achieved around $\theta\approx\frac\pi2$; in this case (for large $\gamma$ and optimal angle $\tau$ -- see \cref{app:interf}  for details)
\begin{equation-aligned}
    \Delta\theta_\psi &= \frac{1}{2\abs{\gamma}},\\
    \Delta\theta_\phi &\approx \Delta\theta_\psi \times \overbrace{\sqrt{3-\sqrt6}}^{\approx 0.74}.
\end{equation-aligned}

\section{\label{sec:summary} Conclusions}
The most important results of this work are the Propositions \labelcref{prop:poly_det} through \labelcref{prop:fid_SDP_pure}: they utilize a novel description of the $\SUT$ characteristic functions in terms of polynomial expressions in order to answer basic questions related to rotationally covariant state transformations. Such a characterization enables the direct use of the more general characteristic function theory, in a manner similar to the application of polynomials in the case of $\UO$. {The $\SUT$ transformations can be interpreted as passive linear optics mode mixing, and the found results are applied in \cref{sec:interferometer} to show the possiblity of state transformation in an inteferometric setting improving metrological sensitivity -- this transformation can be realized as either a state preparation or postprocessing step, since it explicitly commutes with the interferometer action.} 

Solutions to the $\SUT$-covariant state transformation problems presented in this work provide answers to the basic questions of quantum information science. The basic interactions found in the physical world do not depend on the frame of reference. This leads to the angular momentum being conserved, but further constraints on the allowed output states can be made by application of our observations. The same structure arises if a quantum state of angular momentum is prepared with respect to an unknown orientation: it can be deterministically transformed to only a subset of quantum states, characterized by \cref{prop:poly_det}.

{The results presented here are also applicable in the abstract theory of quantum reference frames, due to significant simplification of calculations involving $\SUT$ rotations through the polynomial parametrization (\cref{prop:su2_reps} and \cref{prop:charaspoly}). These observations together with known state transformation conditions offers a way to determine possible transformations between resource states. While it is known that deterministic transformations are unable to increase the relative orientation information, through \cref{prop:poly_det} it is possible to verify if a single reference $\ket\psi$ can be divided into a pair $\ket\phi$ and $\ket\xi$, and as exemplified in \cref{sec:interferometer}, probabilistic amplification is also possible in some cases.}

The methods presented in this paper require the target state to be known beforehand. We plan to generalize the procedures by developing a classification of the achievable states through decomposition of the initial characteristic function. Further research is also needed in order to understand the accessibility structure of metrologically important states of angular momentum (e.g., squeezed or Dicke states).

This work was supported by
the Deutsche Forschungsgemeinschaft  (DFG, German Research
Foundation, project numbers 447948357 and 440958198), the
Sino-German Center for Research Promotion (Project M-0294),
the ERC (Consolidator Grant 683107/TempoQ), the German
Ministry of Education and Research (Project QuKuK, BMBF Grant
No. 16KIS1618K), and the European Union’s Horizon 2020 research and
innovation programme under the Marie Skłodowska-Curie grant agreement
(No 945422). FO acknowledges support by the Studienförderfonds der Universität Siegen. We would like to thank David Jennings for the motivation to carry out this research.

\newpage
\bibliography{apssamp_arxiv}

\newpage
\appendix
\newpage
\onecolumngrid
\section{Majorana stars and polynomial $\SUT$ representation \label{app:majorana}}

\begin{figure}
    \centering
    \includegraphics[width=.5\linewidth]{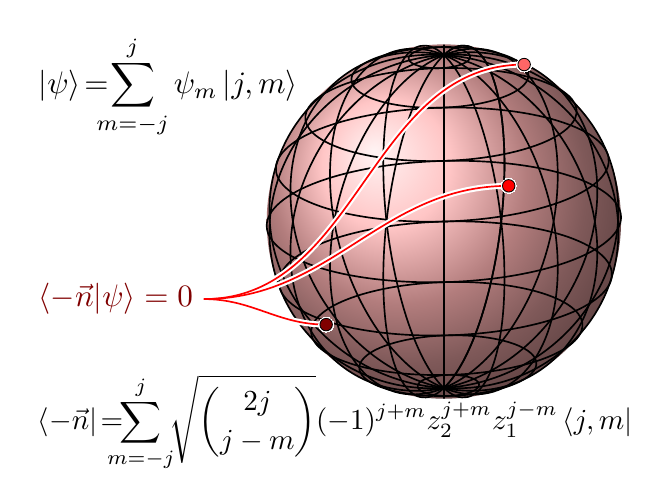}
    \caption{For a pure spin state $\ket\psi$ with definite total angular momentum $j$, the Majorana stars is a collection of $2j$ points on an unit sphere fully defining the state in question. The points correspond to directions $\vec n$ along which the state has zero overlap with a coherent state, $\langle -\vec n \vert \psi \rangle=0$.}
    \label{fig:majoranas}
\end{figure}
In this section, we aim to present the concept of Majorana stars and their connection to homogeneous polynomials. The goal is to provide an intuitive introduction to these mathematical entities, emphasizing their relevance in understanding spin-$j$ states and their relationship to the action of $\text{SU}(2)$ unitaries.
In addition to the explicit description through the amplitudes $\psi_m$, there exist other approaches: Majorana stars \cite{Majorana1932}  and homogeneous polynomials \cite{hall2003lie} are examples. The former are collections of points (also called \emph{constellations}) on a sphere, which can be regarded as roots of the associated polynomials.

Indeed, this is the traditional way to define the star representation \cite{Bengtsson2006}. It can be defined also through the use of \emph{spin coherent state} (\cref{def:coherentspin}). Such states are defined up to a phase only, but this does not pose an issue here: the Majorana stars are related to the probabilities $\lvert \langle \vec n \vert \psi \rangle \rvert^2$.  

This definition is sufficient to provide an informal definition of the collection of Majorana stars: intuitively, for a state of the form shown in \cref{eq:defjstate}, the 

the constellation is formed by the directions $\vec n$ for which the probability of finding an antipodal coherent state $\ket{-\vec n}$ is zero: 
\begin{equation-aligned}
    \langle -\vec n \vert \psi \rangle =0.
    \label{eq:majcohoverlap}
\end{equation-aligned}
The antipodal points are used in the definition exactly so that the coherent state $\ket{\vec n}$ has the Majorana constellation consisting only of the vector $\vec n$ itself: $\braket{-\vec n}{\vec n}=0$. To formalize this intuition (and take multiplicities into account), the polynomial parametrization of coherent states (\cref{eq:cohdefpoly}) can be used. Conveniently, if a coherent ket $\ket{\vec n}$ is described by a pair of complex numbers $(z_1, z_2)$, the antipodal ket $\ket{-\vec n}$ corresponds to $(-z_2^*,z_1^*)$ and subsequently
\begin{equation-aligned}
    \braket{-\vec n}{\psi} &= \left(\sum_{m=-j}^j \sqrt{\binom{2j}{j-m}}  (-z_2^*)^{j+m} (z_1^*)^{j-m} \ket{j,m}\right)^\dagger\left(\sum_{m=-j}^j \psi_m\ket{j,m}\right)\\
   &= 
    \!\! \sum_{m=-j}^{j}\!\sqrt{\binom{2j}{j-m}} (-1)^{j+m} z_2^{j+m}z_1^{j-m} \psi_{m}.
    \label{eq:majcohoverlap}
\end{equation-aligned}

The Majorana stars are exactly the directions $\vec n$ corresponding to the projective roots of this polynomial (taken with multiplicities) -- see \cref{fig:majoranas}:
\begin{defi}
    The Majorana constellation of a state $\ket\psi = \sum_{m=-j}^j \psi_m \ket{j,m}$ with definite spin $j$ is a collection of of $2j$ sphere points $\{\vec n^{(i)}\}_{i=1}^{2j}$ {such that \cref{eq:majcohoverlap} holds}. Each $\vec n^{(i)}$ in the collection corresponds to a point on the complex sphere $(z_1^{(i)}, z_2^{(i)})$ through \cref{eq:expvals}; the multiplicity of $\vec n$ is equal to the root multiplicity of this tuple in \cref{eq:majcohoverlap}.
    \label{def:majoranas}
\end{defi}
For instance, the constellation of a coherent state $\ket{\vec n}$ is $2j$ copies of $\vec n$. The star representation of any pure state is unique, provided it is fully contained within a single irreducible representation $j$. 

The Majorana representation provides an interpretation of the polynomial form (\cref{eq:majdefeq})); the key insight is that they demonstrate the action of \(\SUT\) unitaries via polynomial parametrization. The stars rotate rigidly in accordance with the $\SUT$ rotations: to show what is meant by that, pick a state $\ket\psi$ with the constellation $\{\vec n^{(i)}\}_{i=1}^{2j}$, and a $\SUT$ element which corresponds (in representation) to $U$ acting on $\ket\psi$ and $O\in\text{O}(3)$ being the related rotation on 3-dimensional real space. Then, the state $U \ket\psi$ has the constellation of  $\{O\vec n^{(i)}\}_{i=1}^{2j}$.

As previously mentioned, polynomial representation of $U^{(j)}$ can be expressed in closed form as a hypergeometric function. Taking terms independent  from $a$ in the sum within \cref{eq:rep_polysum}, one can obtain
\begin{equation-aligned}
    U^{(j)}_{m',m}=\sqrt{\frac{\binom{2 j}{j-m}}{\binom{2 j}{j-m'}}} v^{m-m'} \left(u^*\right)^{j-m}
   u^{j+m'} \sum_{\!\!a=\max \left\{0,m'-m\right\}}^{\!\!\min\left\{\!j\!-\!m\!,j\!+\!m'\!\right\}}  \binom{j-m}{a}  \binom{j+m}{a+m-m'} \left(-\frac{v v^*}{u u^*}\right)^a
\end{equation-aligned}
 The inside sum can be identified to be the hypergeometric $\,_2 F_1(n,l;k;x)$ for proper choice of the arguments: if one of $n, l$ is negative, it is polynomial in $x$ proportional to the above form (see  \cite{abramowitz1970}, sec. 15.4) and consequently

\begin{equation-aligned}
U^{(j)}_{m',m}=\binom{j+m}{m-m'}  \sqrt{\frac{\binom{2 j}{j-m}}{\binom{2 j}{j-m'}}} &\times v^{m-m'} \left(u^*\right)^{j-m}
   u^{j+m'} \\&\times\, _2F_1\left(m-j,-j-m';m-m'+1;-\frac{v v^*}{u u^*}\right).
\end{equation-aligned}
for $m\ge m'$. For $m<m'$, a pole in $\,_2 F_1$ appears; it is regularized by the binomial prefactor $\binom{j+m}{m-m'}$ written as ${\Gamma(j+m+1)}\left({\Gamma(m-m'+1)\Gamma(j+m'+1)}\right)^{-1}$ and 
\begin{equation-aligned}
U^{(j)}_{m',m}=\binom{j-m}{m'-m} \sqrt{\frac{\binom{2 j}{j-m}}{\binom{2 j}{j-m'}}}&\times (-1)^{m'-m} u^{j+m} \left(v^*\right)^{m'-m} \left(u^*\right)^{j-m'} \\
   &\,
   \times \,_2F_1\left(-j-m,m'-j;-m+m'+1;-\frac{v v^*}{u u^*}\right),
   \end{equation-aligned}
for $m< m'$.
\section{Strict treatment of the characteristic function in the polynomial form}\label{app:proof_max_j}
Let us define the maximal occupied total spin,
\begin{equation}\label{eq:maxj}
    \jmax{\rho}=\max \{j\;|\; \tr(\Pi_j\rho\Pi_j)>0\},
\end{equation}
where $\Pi_j$ is the projector onto total spin-$j$ subspace: $\Pi_j=\sum_\alpha\sum_{m=-j}^j \ket{j,m,\alpha}\bra{j,m,\alpha}$. First of the mentioned problems is that multiple polynomials correspond to the same characteristic function; however this problem admits a simple solution. Note that for any $\chi_\psi(u,u^*,v,v^*)$, the expression $\chi_\psi+(u u^* + v v^* -1) f(u,u^*,v,v^*)$ evaluates to the same value as $\chi_\psi$ on all group elements $(u,v)$ for an arbitrary polynomial $f$ as $u u^*+v v^*=1$ holds. This is exactly the constraint of $\SUT$ elements: to take care of the ambiguity, we use the properties of Gr\"obner bases with respect to polynomial division \cite{cox1994ideals} in order to define the \emph{canonical version} $\tilde \chi$ of a polynomial $\chi$:

\begin{defi}
    The \emph{canonical} polynomial  of a multivariate polynomial $\chi(u,v,u^*,v^*)$ is uniquely defined by
\begin{equation}
    \tilde \chi (u,v,u^*,v^*) = \chi - (u u^* + v v^* -1) f,
\end{equation}
where the polynomial $f$ is chosen such that the result of polynomial division of $\tilde \chi$ by $(u u^* + v v^* -1)$ has a zero quotient. 
\end{defi}
With that in mind, let us consider the linear operator $\pi$ first mentioned in \cref{def:coefflist}, taking as an input a polynomial in $u,v,u^*,v^*$, e.g.,
\begin{equation}
    \chi_\psi(u,u^\ast,v,v^\ast)=\sum_{\substack{a+b+c+d\leq\jmax{\psi} \\ a,b,c,d\geq 0}} \chi_{a,b,c,d}u^au^{\ast b}v^cv^{\ast d}.
\end{equation}
The output of $\pi$ is the list of coefficients of the canonical polynomial $\tilde \chi$, up to the total spin of $\jmax{\psi}$ (see \cref{eq:maxj}):
\begin{equation-aligned}
\pi[\chi_\psi]=(\tilde \chi_{abcd})_{a+b+c+d\leq\jmax{\psi}},
\end{equation-aligned}
such that
\begin{equation}
    \tilde\chi_\psi(u,u^\ast,v,v^\ast)=\sum_{\substack{a+b+c+d\leq\jmax{\psi} \\ a,b,c,d\geq 0}} \tilde \chi_{a,b,c,d}u^au^{\ast b}v^cv^{\ast d}.
\end{equation}

The coefficient list $\pi[\chi]$ is unique; and two polynomials can be compared without ambiguity for equality of the characteristic functions. Thus, the condition of unitary $\SUT$ interconvertibility (\cref{prop:directreach} applied to this case) of $\ket\psi$ and $\ket\phi$ takes the form of
\begin{equation}\label{key}
	\pi[{\chi}_\psi]=\pi[{\chi}_\phi].
\end{equation}
In order to apply the formalism presented in \cref{sec:theo_background}, the dimension of the Hilbert space has to be constrained. Starting from the equality of the characteristic functions
\begin{equation}\label{eq:prob_char}
    \chi_\psi (g)=p\chi_\xi(g)\chi_\phi(g)+(1-p)\chi_\sigma(g)
\end{equation}
for a probabilistic transformation $\ket\psi\mapsto\ket\phi$ required in \cref{corr:stochastic_trafo}, it is clear that the product state $\ket\xi\otimes\ket\phi$ and the state $\ket\sigma$ cannot have support on an irrep where $\ket\psi$ has vanishing support{: this is only possible with superposition, but the two states are effectively mixed together.}

Hence, it is possible to apply \cref{prop:iman1} and \cref{corr:stochastic_trafo} to the group $\SUT$ with all states having finite support and being contained in the Hilbert space
\begin{equation}
    \HH=\oplus_{j=0,\frac12,\dots,\jmax\psi}\operatorname{span}\left(\{\ket{j,m}\}_{m=-j}^j\right)
\end{equation}
as the only one-dimensional irrep of $\SUT$ is trivial and contained in $\HH$.

\begin{prop}\label{prop:max_j}
The maximal occupied angular momentum representation $\jmax{\xi}$ of the state $\ket\xi$ appearing in \cref{prop:iman1} can be constrained:
\begin{equation}
    \jmax{\xi} \leq \jmax{\psi}-\jmax{\phi}.
\end{equation}
Similarly, keeping with the nomenclature of \cref{corr:stochastic_trafo}, if $0<p<1$, then
	\begin{equation-aligned}\label{key}
		\jmax{\xi} &\leq \jmax{\psi} -\jmax{\phi},\\
  \quad \jmax{\sigma} &\leq \jmax{\psi}.
	\end{equation-aligned}
\end{prop}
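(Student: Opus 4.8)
The plan is to work directly with the polynomial form of the characteristic functions and track the degrees. Recall from \cref{prop:charaspoly} that a state $\ket\psi$ with maximal occupied total spin $\jmax\psi$ has a characteristic function $\chi_\psi$ which is a polynomial of degree exactly $2\jmax\psi$ in the variables $u,u^*,v,v^*$, and that the contribution from the top irrep $j=\jmax\psi$ is the unique homogeneous part of top degree $2\jmax\psi$ (the lower irreps contribute only terms of strictly lower degree). The key structural fact I would isolate first is: the degree-$2\jmax\psi$ homogeneous component of $\chi_\psi$ is nonzero, since $\tr(\Pi_{\jmax\psi}\rho\Pi_{\jmax\psi})>0$ forces the top-spin block of the state to be nontrivial, and the characteristic function restricted to a single irrep never vanishes identically (e.g. it equals $1$ at the group identity). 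I would phrase this carefully in terms of the canonical representative $\tilde\chi$ of \cref{def:coefflist}, noting that reduction modulo $uu^*+vv^*-1$ does not change the top-degree homogeneous part, so the degree of $\pi[\chi_\psi]$ genuinely detects $2\jmax\psi$.

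Next I would take the defining equation of \cref{corr:stochastic_trafo}, namely $\chi_\psi(g)=p\,\chi_\xi(g)\chi_\phi(g)+(1-p)\chi_\sigma(g)$ (and the special case $\chi_\psi=\chi_\xi\chi_\phi$ of \cref{prop:iman1}), and compare the top-degree homogeneous parts of both sides. On the left the top degree is $2\jmax\psi$. On the right, the product term $\chi_\xi\chi_\phi$ has top degree $2\jmax\xi+2\jmax\phi$ with a nonzero leading homogeneous part (the product of two nonzero homogeneous polynomials in $\CC[u,u^*,v,v^*]$ is nonzero, since that ring is an integral domain), and $\chi_\sigma$ has top degree $2\jmax\sigma$. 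If $0<p<1$ both terms on the right are genuinely present. The crucial point is that there can be no cancellation of the top-degree part between $p\chi_\xi\chi_\phi$ and $(1-p)\chi_\sigma$ if their degrees coincide, because each characteristic function is a \emph{positive-type} function on the group — in particular its integral against Haar measure picks out the trivial-irrep coefficient, and more usefully the leading homogeneous parts here are honest squared-modulus-type expressions that cannot destructively interfere (this is where I would lean on the explicit form $\chi = \sum_{m',m} U^{(j)}_{m',m}\psi_{m'}^*\psi_m$ for the top block). Hence $\max(2\jmax\xi+2\jmax\phi,\,2\jmax\sigma)=2\jmax\psi$, which gives both $\jmax\xi\le\jmax\psi-\jmax\phi$ and $\jmax\sigma\le\jmax\psi$ at once; the deterministic case is the sub-case $p=1$, $\sigma$ absent.

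The main obstacle I anticipate is making the no-cancellation argument airtight: a priori the leading homogeneous components of $\chi_\xi\chi_\phi$ and of $\chi_\sigma$ are arbitrary polynomials, and one must rule out that they sum to something of lower degree. The clean way around this is \emph{not} to argue about the product term directly but to use the observation already recorded in the paragraph preceding the proposition: the mixture $p\chi_{\xi}\chi_\phi+(1-p)\chi_\sigma$ arises as the characteristic function of a state that is literally a direct-sum/mixture construction, so $\ket\xi\otimes\ket\phi$ and $\ket\sigma$ must both be supported on irreps where $\ket\psi$ has support — there is no superposition available to cancel anything. Concretely, evaluating the $g$-equation against the projector onto irrep $j$ (via Haar-averaging $\overline{U^{(j')}_{m'm}(g)^*\,\chi(g)}$) shows that the $j$-component of the right side vanishes whenever the $j$-component of $\chi_\psi$ does; applying this at $j=\jmax\psi+1,\jmax\psi+2,\dots$ forces $\jmax\sigma\le\jmax\psi$ and, because the irreps appearing in $\chi_{\xi\otimes\phi}$ run up to $\jmax\xi+\jmax\phi$ (Clebsch–Gordan), also $\jmax\xi+\jmax\phi\le\jmax\psi$. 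I would present the argument in this second form, as it sidesteps the integral-domain/cancellation subtlety entirely and reuses machinery already in the appendix.
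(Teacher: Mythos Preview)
Your proposal is correct and, after you (rightly) pivot away from the raw polynomial-degree/integral-domain argument to the irrep-by-irrep analysis, it is essentially the paper's own proof: decompose each characteristic function along irreps, use that the per-irrep pieces are positive semidefinite so the two right-hand terms cannot cancel (your ``no superposition available'' is exactly this), invoke linear independence of characteristic functions on distinct irreps (your Haar-averaging), and use Clebsch--Gordan to identify $\jmax{\xi\otimes\phi}=\jmax\xi+\jmax\phi$. The only cosmetic difference is that the paper states the no-cancellation step as positive semidefiniteness of $\chi_{\rho^{(j)}}$ rather than via the mixture interpretation, but the content is identical.
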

\begin{proof}
	
The characteristic function of a state $ \ket\rho=\sum_{j\in\{0,\frac12,\dots,\jmax{rho}\}} \ket{\rho^{(j)}} $ is given by 
 \begin{equation-aligned}
     \chi_\rho=\sum_{j\in\{0,\frac12,\dots,\jmax{\rho}\}} \chi_{\rho^{(j)}},
 \end{equation-aligned}
 and all $ \chi_{\rho^{(j)}} $ are positive semidefinite as functions over $\SUT$. Recall that $ \chi_\xi\chi_\phi=\chi_{\phi\otimes\xi}$ and $ \ket\phi\otimes\ket\xi $ has the maximal angular momentum component of $ \jmax{\phi\otimes\xi}=\jmax{\phi}+\jmax{\xi}$ \footnote{$ \ket\phi\otimes\ket\xi $ is a product state and the Clebsch-Gordan coefficients of the form share $ C_{j_1,m_1;j_2,m_2}^{j_1+j_2,m_1+m_2} $ the same sign and thus cannot cancel out.}.

This must not be larger than $ \jmax{\psi} $, because the characteristic functions on different irreps are linearly independent. As {the part of $ \chi_{\phi\otimes\xi}$ corresponding to the irrep $j=(\jmax{\xi}+\jmax{\phi})$} is nonzero and positive semidefinite, we get
	\begin{equation}\label{key}
		\jmax{\xi}\leq \jmax{\psi}-\jmax{\phi}.
	\end{equation}
	The same holds for the characteristic function $ \chi_\sigma $ resulting in
	\begin{equation}\label{key}
		\jmax{\sigma}\leq\jmax{\psi}.
	\end{equation}
	This also implies that the maximum irrep in the support of a state cannot increase in probabilistic transformations $ \ket\psi\mapsto\ket\phi $.
\end{proof}

\section{Proof of \cref{prop:poly_det}}\label{app:proof_det}
By application of \cref{prop:iman1}, the existence of a $\SUT$-covariant channel realizing the transformation is equivalent to the existence of a particular state $\ket\xi$ such that the characteristic functions factorize: $\chi_\psi = \chi_\xi \chi_\phi$. The product corresponds to the characteristic function of the tensor state, $\chi_{\xi\otimes\phi}$, and hence the $\SUT$-covariant transformation is possible if and only if a state $\xi$ exists such that
\begin{equation}\label{eq:char_SU2}
    \chi_\psi(g)=\chi_{\xi\otimes\phi}(g)~~\forall g\in \SUT~.
\end{equation}
This is possible if and only if the canonical polynomials describing the characteristic functions have the same coefficients.

\section{Proof of \cref{prop:fid_SDP}}\label{app:proof_fid}
The output state $\EE(\rho,\{F_{J,\alpha}\}$ is linear in the matrices $ F_{J,\alpha} $, which must be positive semidefinite and $\rank(F_{J,\alpha})\leq 1$ for all $ \alpha $ and $ J\in\{0,\frac{1}{2},1,\dots\} $ by construction. Thus, the maximum fidelity attained over $\SUT$-covariant channels can be found as a solution to the following (not semidefinite) optimization problem:
\begin{equation}\label{first_opt_problem}
	\begin{aligned}
		\max_{X,\{F_{J,\alpha}\}} \quad & \frac{1}{2}\tr(X+X^\dagger), \\
		\text{s.t.}\quad & 
		\begin{pmatrix}
			\sigma & X \\ X^\dagger & \EE(\rho,\{F_{J,\alpha}\})
		\end{pmatrix}\succcurlyeq 0, \\
		& \xi(\{F_{J,\alpha}\})=0, \\
		& F_{J,\alpha}\succcurlyeq 0, \;\forall J,\alpha, \\
		& \rank(F_{J,\alpha})\leq 1, \;\forall J,\alpha, \\
	\end{aligned}
\end{equation}
where $ \xi(\{F_{J,\alpha}\})=0 $ is the linear normalization constraint (see \cref{prop:kraussu2} and \cite{Gour_2008} for details: {they are analogues of \cref{eq:normkraus} and \cref{norm_channel} expressed for the matrices $F_J$ instead of vectors $f_{J,\alpha}$ appearing in the original problem formulation}). To make this an explicitly semidefinite optimization problem, it has to be shown that 
\begin{enumerate*}[label=(\roman*)]
\item there are only finitely many relevant Kraus operators (corresponding to the matrices $F_{J,\alpha}$) and
  \item the optimization can be performed without the rank constraint.
\end{enumerate*}

The first problem can be solved by observing that for a given input $\rho$ and a target state $\sigma$ with finite support ($\jmax{\rho},\jmax{\sigma} < \infty$ {as defined in \cref{eq:maxj}}), the value of $J$ (the representation index of Kraus operator) can be bounded. Terms with $ \abs{J-j^\prime}>\jmax{\rho} $ or $ \abs{J-k^\prime}>\jmax{\rho} $ vanish (see equation \eqref{eq:krausformsu2}) and only terms with $ \abs{J-j^\prime}\leq \jmax{\rho}$ and $ \abs{J-k^\prime}\leq \jmax{\rho} $ have to be considered. The further constraints stem from the state $ \sigma $: only nonzero terms must fulfill $ j^\prime\leq \jmax{\sigma}$ or $ k^\prime\leq \jmax{\sigma} $. Combining both inequalities, we get
\begin{equation}\label{key}
	J\leq \jmax{\rho}+\jmax{\sigma}.
\end{equation}
Hence, the Kraus decomposition of $ \EE(\rho) $ contains only finitely many terms and the search space for the matrices $F_{J,\alpha}$ is finite-dimensional.

In the original form of \cref{first_opt_problem}, only rank-$1$ matrices are taken into account. The more general case (of unrestricted rank optimization) contains this set, but may in principle be \emph{too} general: the maximization result could be just an upper bound. However, the unconstrained optimization is equivalent to the restricted case: any positive semidefinite matrix $F_J$ can be decomposed into rank-$1$ matrices $F_{J,\alpha}$ by
\begin{equation}\label{eq:fjdecomp}
	F_J=\sum_\alpha F_{J,\alpha}.
\end{equation}
Here, $F_{J,\alpha}$ are essentially rescaled projectors onto the one-dimensional eigenspaces of $F_J$. This translates directly to the description of the channel $\EE$ via the Kraus operators by linearity:
\begin{equation}\label{key}
	\EE(\rho,\{F_{J,\alpha}\})=\EE(\rho,\{F_{J}\}).
\end{equation} 
The normalization conditions (\cref{norm_channel}) hold automatically.\\

\section{Exponential parametrization examples \label{app:exppar}}
With unitary representation in spin $j$ defined as $U^{(j)} = {\exp(i (n_x J^{(j)}_x+n_y J^{(j)}_y+n_z J^{(j)}_z))}$ and 
\begin{equation-aligned}
    r&:=\sqrt{n_x^2+n_y^2+n_z^2},    &\nu:=n_x+i n_y,~\mu&:=n_x^2+n_y^2,\\
    c&:=\cos r,~s:=\sin r,    &c':=\cos \left(\frac{r}2\right),~s'&:=\sin \left(\frac{r}2\right),
\end{equation-aligned}
the following matrix corresponds to the $j=1$ representation:
\begin{equation-aligned}
    U^{(j=1)}=\frac{1}{4r^2}\left(
\begin{array}{ccc}
 2 (c+1) \mu +4 c n_z^2+4 i r s n_z & 4 i \sqrt{2} \nu ^* s' \left(r c'+i n_z s'\right) & 2 (c-1) \left(\nu ^*\right)^2 \\
 2 \sqrt{2} \nu  \left((c-1) n_z+i r s\right) & 4 \left(c \mu +n_z^2\right) & 2 \sqrt{2} \nu ^* \left(-(c-1) n_z+i r s\right) \\
 2 (c-1) \nu ^2 & 2 \sqrt{2} \nu  \left(-(c-1) n_z+i r s\right) & 2 (c+1) \mu +4 c n_z^2-4 i r s n_z \\
\end{array}
\right).
\end{equation-aligned}
Evidently, the polynomial parametrization found in \cref{eq:polyrepsu2} is better suited for symbolic calculations.
\section{Inferferometric experiment calculations \label{app:interf}}
The interferometer action is described by the following mode transformation:
\begin{equation}
    \begin{pmatrix} b_1 \\ b_2 \end{pmatrix} = \overbrace{\begin{pmatrix}
        \cos\theta&-\sin\theta\\\sin\theta&\cos\theta
    \end{pmatrix}}^{I_\theta} \begin{pmatrix} a_1 \\ a_2 \end{pmatrix}
    \label{eq:u1subu2form}
\end{equation}
The set of matrices $I_\theta$ forms a one-dimensional abelian group: $I_\theta I_\lambda = I_{\theta+\lambda}$; it is the $\UO$ subgroup of the set $\SUT$ of all unitary and unit determinant matrices of size $2$. The unitary action $U_\theta$ on quantum states of light:
\begin{equation-aligned}
    \ket\psi = \sum_{m,n\in\NN} \psi_{mn} \ket{m,n},
\end{equation-aligned}
is defined by the unitary representation $U_\theta$:
\begin{equation-aligned}
    \ket\psi\mapsto \underbrace{\exp(\theta(a_2^\dagger a_1-a_1^\dagger a_2))}_{U_\theta}\ket{\psi}.
\end{equation-aligned} 
Note that $I_\theta$ is still a subgroup of $\SUT$, and the result of \cref{prop:su2_reps} still holds here; the relevant representations here are supported on fixed total photon number subspaces spanned by $(\{\ket{n,0},\ldots,\ket{1,n-1},\ket{0,n}\})$, with the effective $j$ index equal to $j:=\frac{n}2$. In this case, the parameters can be inferred from the form of $\SUT$ characterication compared to \cref{eq:u1subu2form}: with $z:=\exp(i\theta)$,
\begin{equation-aligned}
    u=\overbrace{\frac{z+z^{-1}}{2}}^{\cos\theta},~v=\overbrace{\frac{z-z^{-1}}{2i}}^{\sin\theta}.
    \label{eq:su2tou1uv}
\end{equation-aligned}
The characteristic function of this subgroup can be thought of as the expectation value of the unitary,
\begin{equation-aligned}
    \chi_\psi(\theta) = \langle\psi\vert U_\theta \vert \psi\rangle,
\end{equation-aligned}
and under certain regularity conditions can be expanded (by noticing that the terms in $U_\theta$ are formed by positive and negative integer powers of $z$):
\begin{equation-aligned}
    \chi_\psi = \sum_{k\in\ZZ} C_k z^k, 
    \label{eq:chipsiu1decomp}
\end{equation-aligned}
with $z:=\exp(i\theta)$. An arbitrary function $f$ with similar decomposition $f:=\sum_{k\in\ZZ} C_k z^k$ is a characteristic function of some state $\ket\psi$ ($f=\chi_\psi$) if and only if all $C_k\ge0$ and $\sum_{k\in\ZZ}C_k=1$: this is evident by considering the fact that the irreducible representations of $\UO$ are one-dimensional{. Therefore, each one corresponds to a vector $\vert\!\vert{k;\alpha}\rangle\!\rangle$} indexed by $k\in\ZZ$ and multiplicity index $\alpha$; in such case, for the state
\begin{equation-aligned}
    \ket\psi=\sum_{k,\alpha} \psi_{k,\alpha} \vert\!\vert{k;\alpha}\rangle\!\rangle,
\end{equation-aligned}
the characteristic function has the form $\chi_\psi=\sum_{k\in\ZZ}(\sum_\alpha \lvert\psi_{k;\alpha}\rvert^2)z^k$, and the coefficients are explicitly nonnegative.

For the input consisting of coherent state $\ket\gamma$ in the mode $a_1$ and vacuum $\ket0$ in the mode $a_2$, the state is $\ket\psi =\sum_{n\in\NN} \exp{-\lvert\gamma\rvert^2}\frac{\gamma^n}{\sqrt{n!}} \ket{n,0}$. As is evident from the form of $\SUT$ parametrization (\cref{prop:su2_reps}) together with \cref{eq:su2tou1uv} (see also \cite{supp}), the elementary matrix element building the characteristic function is $\langle n',0\vert U_\theta \vert n,0\rangle=\delta_{n',n} u^n$, and subsequently

\begin{equation-aligned}
    \chi_\psi &= \sum_{n\in\NN}  \exp{-\abs{\gamma}^2}\frac{\abs{\gamma}^{2n}}{n!} \left(\frac{z+z^{-1}}{2}\right)^n \\
    &=\exp{-\abs{\gamma}^2} \exp\left(\abs{\gamma}^2 \left(\frac{z+z^{-1}}{2}\right)\right).
\end{equation-aligned}

This expression can be identified as the (rescaled) generating function of the modified Bessel function $I_k(\lambda)$ (see \cite{abramowitz1970}, section 9.6), and expanded to
\begin{equation-aligned}
\chi_\psi = \sum_{k\in\ZZ} \overbrace{\exp(-\abs{\gamma}^2) I_k(\abs{\gamma}^2)}^{C_k} z^k.
\end{equation-aligned}

The calculation of the characteristic function of $\ket\phi=\ket\gamma\ket\tau$, defined as in \cref{eq:interfketphidef} and \cref{eq:interfkettaudef}, follows a similar pattern. Then, the relevant matrix elements
{are}
\begin{equation-aligned}
    \langle n,0\vert U_\theta \vert m,0\rangle =&~ \delta_{m,n} u^n,\\
    \langle n\!-\!2,2\vert U_\theta \vert m\!-\!2,2\rangle =&~ \delta_{m,n} u^{n-4} \!\times\! \Big[\!-\!(2 n\!-\!4)\abs{u v}^2\!+\!\left(\frac{n}{2}\!-\!1\right)\!(n\!-\!3) \abs{v}^4\!+\!\abs{u}^4\Big],\\
    \langle n,0 \vert U_\theta \vert m-2,2\rangle =&~ \delta_{m,n} \sqrt{\frac{(n-1) n}{2}} \left(v^*\right)^2 u^{n-2},\\
    \langle n-2,2 \vert U_\theta \vert m,0\rangle =&~ \delta_{m,n} \sqrt{\frac{(n-1) n}{2}} v^2 u^{n-2}.
\end{equation-aligned}
With \cref{eq:su2tou1uv} and the expanded form of the state $\ket\phi$,
\begin{equation-aligned}
    \ket\phi =& \exp(-\abs{\gamma}^2) \Big[\left(\ket{0,0}+\gamma \ket{1,0}\right)\cos\tau \\&+ \sum_{n\ge 2} \alpha^n\left(\frac{\cos\tau\ket{n,0}}{\sqrt{n!}}-\frac{\sin\tau\ket{n-2,2}}{\alpha^2\sqrt{(n-2)!}}\right)\Big],
\end{equation-aligned}
the characteristic function can be found similarly to the case of $\ket\psi=\ket\gamma\ket0$:
\begin{equation-aligned}
    \chi_\phi = \chi_\psi \times \sum_{k=-4}^4 z^k A_k,
    \label{eq:chiphifromchipsi}
\end{equation-aligned}
where (see \cite{supp} for the calculation performed in Wolfram Mathematica):
\begin{equation-aligned}
    A_{\pm 4}=&\frac{1}{32} | \gamma | ^4 \sin ^2(\tau ),\\
    A_{\pm 3}=&\frac{1}{4} \abs{\gamma}^2 \sin ^2(\tau )\\
    A_{\pm 2}=&\frac{1}{16} \Big[-2 \left(| \gamma | ^4-2\right) \sin ^2(\tau )+\\
    &(\gamma ^2+\left(\gamma ^*\right)^2 ) \sqrt{1-\cos (4 \tau )}\Big],\\
    A_{\pm 1}=&-\frac{1}{4} \abs{\gamma}^2 \sin ^2(\tau ),\\
    A_0 =&1-2(A_1+A_2+A_3+A_4).
\end{equation-aligned}
By \cref{eq:chiphifromchipsi}, the coefficients $P_k$ in $\chi_\phi = \sum_{k\in\ZZ} P_k z^k$ are convolution of $C_k$ (of the characteristic function $\chi_\psi$) and $A_k$. 

Let us now determine reachability of $\ket\phi=\ket{\gamma(1-\varepsilon)}\ket{\tau}$ from $\ket\psi=\ket\gamma\ket0$. By \cref{corr:stochastic_trafo} (with auxiliary singlet state $\ket\xi$, such that $\chi_\xi=1$) applied to the decompositions of $\chi_\psi$, $\chi_\phi$ and $\chi_\sigma$ into Laurent series:
\begin{equation-aligned}
    \chi_\psi =& \sum_{k\in\ZZ} C_k z^k,\\
    \chi_\phi =& \sum_{k\in\ZZ} P_k z^k,\\
    \chi_\sigma =& \sum_{k\in\ZZ} Q_k z^k,\\
\end{equation-aligned}
with the restriction that the characteristic functions actually describe quantum states --  all $C_k, P_k, Q_k\ge0$ -- we conclude that the maximum probability of interconversion is such that for some $k'$, $C_{k'}= p P_{k'}$, and $Q_{k'}=0$ meaning that
\begin{equation-aligned}
    p=\inf_{k\in\ZZ} \frac{C_k}{P_k}.
    \label{eq:maxpckpk}
\end{equation-aligned}

For $C_k$ corresponding to the coherent input $\ket\gamma\ket0$ and $P_k$ describing the coefficients of the characteristic function of $\ket{\gamma(1-\varepsilon)}\ket\tau$, the support of both series are entire $\ZZ$, and the question whether $p$ defined in \cref{eq:maxpckpk} is nonzero is determined by the asymptotic properties of $C_k$ and $P_k$. 

For $\abs{k}\rightarrow \infty$, the modified Bessel function $I_k$ has the approximation of (see \cite{abramowitz1970}, section 9.6) 
\begin{equation-aligned}
    I_k(x) \approx \left(\frac{{x}}2\right)^{\abs{k}} \frac{1}{(\abs{k})!}.
\end{equation-aligned}
Upon insertion of the asymptotic form to $C_k$ and $P_k$, comparison of the dominant terms proves that extraction with positive nonzero $p$ is possible for arbitrary $\varepsilon>0$.

The variance of $\delta N$ defined in \cref{eq:deltan} for the state $\ket\gamma\ket\tau$ can be calculated by substituting $b$ and $b^\dagger $ in \cref{eq:deltan} with \cref{eq:u1subu2form}, and explicit determination of the resulting polynomial in $a_1, a_2, a_1^\dagger, a_2^\dagger$. In the calculation, standard commutation relation rules are used ($a_i a_j^\dagger = a_j^\dagger a_i + \delta_{i,j} $), together with the definitional property of coherent states ($a_1 \ket\gamma =\gamma\ket\gamma$) and explicit matrix calculations involving $a_2, a_2^\dagger$ and $\ket\tau$; the result is
\begin{equation-aligned}
    \!\!\Delta^2(\delta N)\!=& -(\gamma^2+(\gamma^*)^2)\frac{\sin ^2(2 \theta ) \sin (2 \tau )}{\sqrt{2}}\\
    &\!-\!2\abs{\gamma}^2\!\left(\! \sin
   ^2(2 \theta ) \cos (2 \tau )\!+\!\frac{\!\cos (4 \theta )\!}{2}\!-\!1\right)\\
   &+2 \sin ^2(\tau ) \left(\cos ^2(2 \theta ) \cos (2 \tau )+1\right),
\end{equation-aligned}
{and the expectation value has a simple form of}
\begin{equation-aligned}
    \langle{\delta N\rangle}=\cos (2 \theta ) \left(1-\abs{\gamma}^2-\cos (2 \tau )\right).
\end{equation-aligned}
For any given $\gamma$ and $\tau$, the maximal accuracy of phase determination, as measured by
\begin{equation-aligned}
    \Delta\theta = \frac{\sqrt{\Delta^2(\delta N)}}{\lvert \frac{\partial }{\partial \theta} \langle \delta N\rangle\rvert},
\end{equation-aligned}
is achieved for $\theta=\frac\pi4$. With this assumption,
\begin{equation-aligned}
    \Delta\theta=&\frac{\!\!\!\sqrt{3 \abs{\gamma}^2\!\!\!+\!1\!-\!\sqrt2 \Re(\gamma^2) {\sin (2\tau )}  \!-\!\left(2\abs{\gamma}^2\!\!+\!1\right)\!\cos (2 \tau )}}{2 \left(| \gamma | ^2+\cos (2 \tau )-1\right)}.
\end{equation-aligned}
with $\tau=0$, this reduces to the case of $\ket\psi=\ket\gamma\ket0$: $\Delta\theta=(2\abs{\gamma})^{-1}$. Since the only dependence of $\gamma$ phase is through $\Re\left(\gamma^2\right)$, which has to be as large as possible for smallest $\Delta\theta$, let us assume that $\gamma>0$ and then
\begin{equation-aligned}
    \Delta\theta =\frac{1}{2\gamma} \frac{\sqrt{(2 \sin ^2(\tau )){\gamma ^{-2}}-\sqrt{2} \sin (2 \tau )-2 \cos (2 \tau )+3}}{1+(\cos(2\tau)-1)\gamma^{-2}},
\end{equation-aligned}
which asymptotically for $\gamma\rightarrow\infty$ is minimal for $\theta = \arctan \sqrt{5-2\sqrt 6}$ and 
\begin{equation-aligned}
    \Delta\theta \approx \frac{1}{2\gamma}\times{\sqrt{3-\sqrt6}}.
\end{equation-aligned}
\end{document}